\def\init{0}
\def\EGM{\mathrm{EGM}}
\def\DSP{\mathrm{DSP}}
\def\LSP{\mathrm{LSP}}
\newtheorem{theorem}{Theorem}
\newtheorem{lemma}{Lemma}
\newtheorem{remark}{Remark}
\newtheorem{corollary}{Corollary}
\title{A computational model for proliferation dynamics of division- and label-structured populations}
\author{J.~Hasenauer, D.~Schittler, and F. Allg{\"o}wer \\ \\
  Institute of Systems Theory and Automatic Control\\
  University of Stuttgart, Germany\\
  www.ist.uni-stuttgart.de \\ \\
}
\begin{document}

\maketitle

\begin{abstract}

In most biological studies and processes, cell proliferation and population dynamics play an essential role. Due to this ubiquity, a multitude of mathematical models has been developed to describe these processes. While the simplest models only consider the size of the overall populations, others take division numbers and labeling of the cells into account. In this work, we present a modeling and computational framework for proliferating cell population undergoing symmetric cell division. In contrast to existing models, the proposed model incorporates both, the discrete age structure and continuous label dynamics. Thus, it allows for the consideration of division number dependent parameters as well as the direct comparison of the model prediction with labeling experiments, e.g., performed with Carboxyfluorescein succinimidyl ester (CFSE). We prove that under mild assumptions the resulting system of coupled partial differential equations (PDEs) can be decomposed into a system of ordinary differential equations (ODEs) and a set of decoupled PDEs, which reduces the computational effort drastically. Furthermore, the PDEs are solved analytically and the ODE system is truncated, which allows for the prediction of the label distribution of complex systems using a low-dimensional system of ODEs. In addition to modeling of labeling dynamics, we link the label-induced fluorescence to the measure fluorescence which includes autofluorescence. For the resulting numerically challenging convolution integral, we provide an analytical approximation. This is illustrated by modeling and simulating a proliferating population with division number dependent proliferation rate.   

\end{abstract}

\textbf{Keyword:}
Proliferating population, label-structured population, flow cytometry, CFSE, division-structured population

\section{Introduction}
\label{sec: Introduction}

Cell proliferation is a central aspect of most biological processes, among others bacterial growth \cite{ZwieteringJon1990,Gyllenberg1986}, immune response \cite{HodgkinLee1996,DeBoerGan2006,LuzyaninaMru2007}, stem cell induced tissue remodeling \cite{GlaucheMoo2009,BuskeGal2011}, and cancer progression \cite{EissingKue2011}. Depending on the biological process, cellular proliferation has different characteristics. Cell division can be symmetric or asymmetric, and the daughter cells may or may not inherit the age of the mother cells (see Figure~\ref{fig: illustration of symmetric and asymmetric division}). While many micro-organisms, such as the budding yeast, grow a daughter cell \cite{ShcheprovaBal2008} which does not inherit the age of the mother cell and is born young, in most multicellular organism the mother cell divides symmetrically into two daughter cells which inherit the age of the mother cell \cite{Hayflick1965}. The latter proliferation type -- which will be the focus of this work -- results in an accumulation of DNA damage and telomere shortening, which may be interpreted as aging of the individual cell. This results in a reduced proliferation potential, a reduced proliferation speed and finally in cell cycle arrest \cite{Hayflick1965,Hayflick1979,GlaucheThi2011}, known as senescence \cite{GerwitzHol2007}. This has been discovered by Hayflick \cite{Hayflick1965} in the 1960s and the upper limit for the number of cell divisions a normal cell can undergo has been termed Hayflick limit.

\begin{figure}[t!]
\centering
\subfigure[Mother cells undergoing symmetric cell division split up into almost identical daughter cells. The difference of daughter cells is merely caused by the stochasticity of DNA replication, resulting in genetic and epigenetic differences, and the stochasticity of cell partitioning, yielding different protein abundances.]{
\includegraphics[scale=1.3]{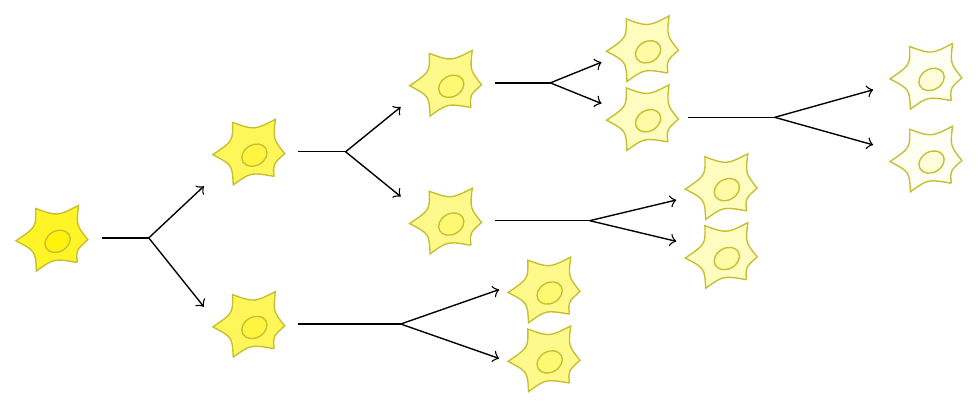}
}
\subfigure[Mother cells undergoing asymmetric cell division yield daughter cells with different cell fates. In the most extreme case, the mother cells grows a daughter cell -- this is called budding. The daughter cell is genetically identical to the mother cell but is born young, meaning that it does not inherit the age of the mother cell.]{
\includegraphics[scale=1.3]{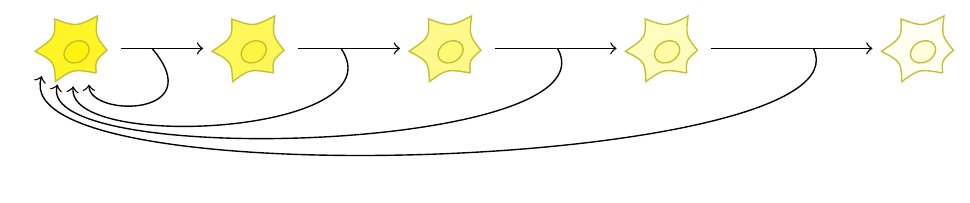}
}
\caption{Illustration of symmetric and asymmetric cell division. The color of the cells indicates the cell's age. In case of symmetric cell division the age strongly correlates with the number of divisions a cell has undergone up to this point.}
\label{fig: illustration of symmetric and asymmetric division}
\end{figure}

A variety of approaches are employed to investigate proliferation, ranging from the analysis of the cell cycle~\cite{SmithMar1973} to the model-based study of population heterogeneity and subpopulations~\cite{GlaucheMoo2009}. Nowadays, for human cell lines especially label-based proliferation assays are used to analyze the proliferation dynamics of cell populations. Common labels are Bromodeoxyuridine (BrdU)~\cite{Gratzner1982} and Carboxyfluorescein succinimidyl ester (CFSE)~\cite{LyonsPar1994}, while mainly the latter is used in recent studies.

CFSE is a fluorescent cell staining dye which stays in cells for a long time and is distributed at cell division approximately equally among daughter cells. Thus, the proliferation of labeled cells results in a  progressive dilution of the dye~\cite{LuzyaninaRoo2009}, as depicted in Figure~\ref{fig: illustration of label dynamics}, and quantitative information about the proliferation dynamics can be gathered using flow cytometry~\cite{HawkinsHom2007}. To determine the proliferation properties of cells, e.g., the rates of cell division and of cell death, from these data, analysis tools are required. The first proposed approaches employ peak detection and devolution~\cite{LuzyaninaMru2007,HawkinsHom2007,NordonNak1999}. Unfortunately, these methods are only applicable if the modes, corresponding to cells with a common division number, are well separated and if the data are not strongly noise corrupted. To overcome these limitations, different model-based approaches have been introduced.

In the literature, mainly three different classes of population models are described: exponential growth models, division-structured population models and label-structured population models. The exponential growth models (EGM) are the simplest ones, and merely describe the number of individuals in a cell population. For this task a one-dimensional ODE, like the Gompertz equation~\cite{ZwieteringJon1990}, is sufficient. While exponential growth models allow the description of the proliferation of many bacterial populations, they are in general not capable of describing the dynamics of human tissue cells. One reason for this is that the cell division and cell death rates are found for many cell systems~\cite{Hayflick1965}, e.g., B cells~\cite{DeBoerGan2006}, T cells~\cite{DeBoerGan2006}, osteoblasts~\cite{KassemAnk1997}, to depend on the division number. To capture these effects, a multitude of division-structured population models (DSP) has been introduced~\cite{DeBoerGan2006,NordonNak1999,RevySos2001,DeenickGet2003,LeePer2008,LeonFar2004,YatesCha2007,Marciniak-CzochraSti2009,StiehlMar2011}. The state variables of these models describe the sizes of the subpopulations, which are defined by a common division number. Hence, these models allow for the consideration of division number dependent properties. Still, these models do not provide information about the label concentrations and thus cannot be compared to data directly but require complicated and error-prone data processing. 

To avoid this, label-structured population models (LSP) are employed~\cite{LuzyaninaRoo2007}. These models describe the evolution of the population density on the basis of a one-dimensional hyperbolic PDE. Hence, they provide predictions for the label distributions at the individual time points and may be fitted to data directly~\cite{LuzyaninaRoo2009,LuzyaninaRoo2007,BanksSut2010,BanksBoc2011}. This renders complex data processing redundant and simplifies the model-data comparison. Still, these models do not allow for a direct consideration of division number dependent parameters. To partly circumvent this problem, complex dependencies of the cell division and cell death rate on time and label concentration are introduced~\cite{BanksSut2010}. These are neither intuitive nor easy to interpret. Furthermore, the simulation of label-structured population models is computationally demanding and requires discretization, entailing further problems.

In the following a model is presented and analyzed which combines the division-structured population models and the label-structured population models and thereby overcomes their individual shortcomings. This population model, which we termed division- and label-structured population model (DLSP), is based on our own work~\cite{SchittlerHas2011a}. The same model has later also been used in \cite{BanksThom2012,Thompson2012} for parameter estimation, including slight modification. Here we provide the first rigorous in-depth assessment its properties.

The DLSP model is introduced in Section~\ref{sec: Modeling DLSP} and incorporates both aspects: Discrete changes of the cell division number due to cell divisions and continuous dynamics of the label distribution. The overall model is a system of coupled partial differential equations. We discuss how this system of PDEs can be split up into two decoupled parts in Section~\ref{sec: Analysis of DLSP model}, namely a single PDE and a set of ODEs, which significantly simplifies the solution. The obtained model is reduced further by truncation of the state space. This truncation and the resulting truncation error can be controlled using the a priori error bound which we derive. As the proposed model unifies the existing models, we outline the relations of the models in Section~\ref{sec: Comparison of different proliferation models}. In Section~\ref{sec: Example}, the method is employed to study a population model with division number dependent division rates and an analysis of the computational complexity of the model is performed. The paper is concluded in Section~\ref{sec: conclusion}.

\begin{figure}[t!]
\centering
\includegraphics[scale=1]{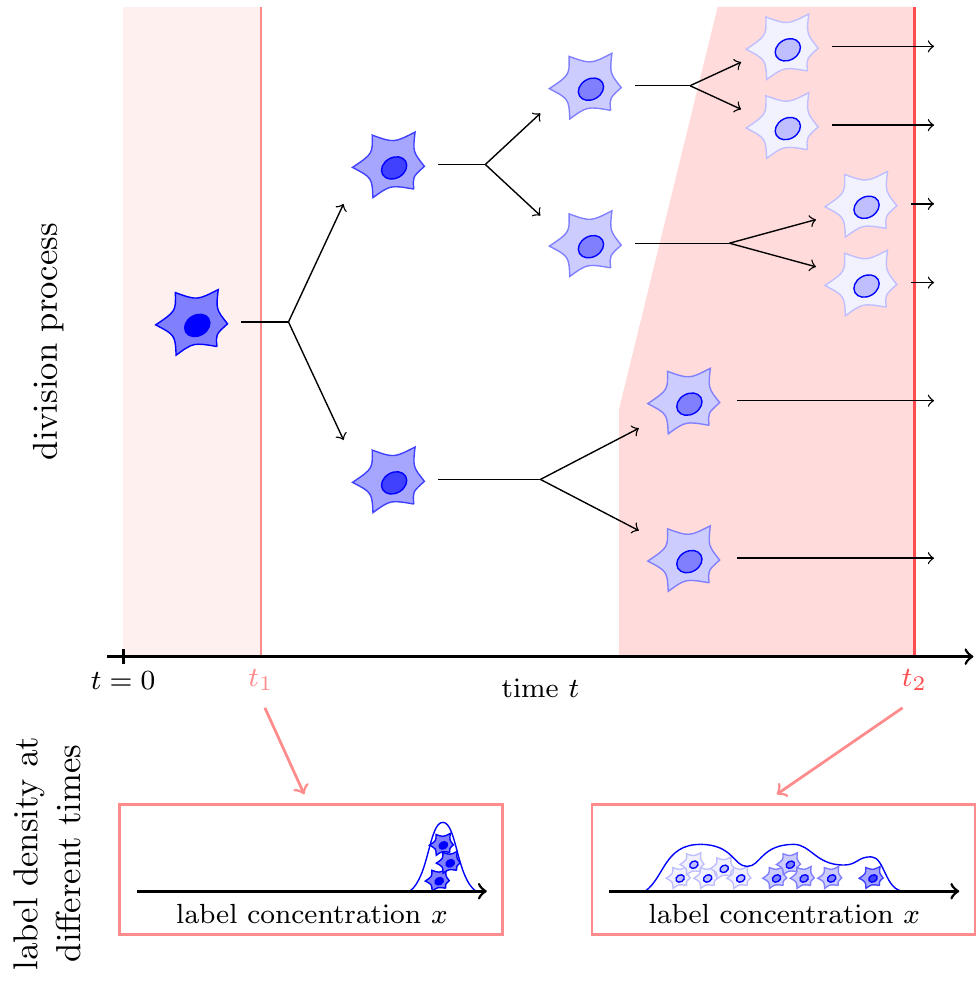}
\caption{Illustration of label dilution due to cell division. The division process results in halving of the concentration at each cell division (top), and the label intensity distribution within the cell populations (bottom). The latter one is accessible, e.g., via labeling with CFSE.}
\label{fig: illustration of label dynamics}
\end{figure}


\section{Modeling division- and label-structured populations}
\label{sec: Modeling DLSP}

As outlined above, the study of proliferation dynamics in cell populations using labeling methods requires the consideration of two important distinct features:
\begin{itemize}
  \item[\tiny{$\blacksquare$}] the label concentration $x$ and
  \item[\tiny{$\blacksquare$}] the number of cell divisions $i$ a cell has undergone.
\end{itemize}
The importance of the label concentration $x \in \mathbb{R}_+$ (with $\mathbb{R}_+ := [0,\infty)$) arises from the fact that this is the quantity which can be observed, e.g., using flow cytometry or microscopy~\cite{HawkinsHom2007}. On the other hand, a direct observation of the number of cell divisions $i \in \mathbb{N}_0$ a cell has undergone is in general not possible, though the division number often plays a crucial role within the model. A cell which has divided once is expected to have different properties, e.g. a different division rate, than a cell which has already divided several dozen times~\cite{Hayflick1965,KassemAnk1997}.

In this paper we propose a model which captures both features of cells, distinct division numbers as well as distinct label concentrations among cells. Therefore, instead of a single PDE model describing the label dynamics of the overall population, a PDE model is defined for every subpopulation. Thereby, the $i$th subpopulation contains the cells which have divided $i$ times. Cell division generates a flux from subpopulation $i$ to subpopulation $i+1$, thus inducing coupling. The system of coupled PDEs is given by
\begin{equation}
\begin{aligned}
	i = 0: \quad &
	\frac{\partial N_0(t,x)}{\partial t} + \frac{\partial (\nu(t,x) N_0(t,x))}{\partial x} =
	- \left( \alpha_0(t) + \beta_0(t) \right) N_0(t,x)\\[2ex]
	\forall i \geq 1: \quad &
	\frac{\partial N_i(t,x)}{\partial t} + \frac{\partial (\nu(t,x) N_i(t,x))}{\partial x} =
	- \left( \alpha_i(t) + \beta_i(t) \right) N_i(t,x) \\
	& \hspace{6.5cm}
	+ 2 \gamma \alpha_{i-1}(t) N_{i-1}(t,\gamma x),
\label{eq: coupled PDE model for cell population}
\end{aligned}
\end{equation}
with initial conditions
\begin{align*}
i = 0: N_0(0,x) \equiv N_{0,\init}(x), \quad \forall i \geq 1: N_i(0,x) \equiv 0.
\end{align*}
In this system, $N_i(t,x): \mathbb{R_+} \times \mathbb{R_+} \rightarrow \mathbb{R_+}$ denotes the label density in the $i$th subpopulation at time $t$. The structure of the models for the individual subpopulations is highly similar to a single PDE which is employed in label-structured models~\cite{LuzyaninaRoo2007}. The fluxes influencing the label distribution $N_i(t,x)$ are: 
\begin{itemize}
  \item[\tiny{$\blacksquare$}] 
$\partial (\nu(t,x) N_i(t,x))/\partial x$, decay of label $x$ in each cell with label loss rate $\nu(t,x)$.
  \item[\tiny{$\blacksquare$}] 
$- \left( \alpha_i(t) + \beta_i(t) \right) N_i(t,x)$, disappearance of cells from the $i$th subpopulation due to cell division with rate $\alpha_i(t)$ and due to cell death with rate $\beta_i(t)$.
  \item[\tiny{$\blacksquare$}] 
$2 \gamma \alpha_{i-1}(t) N_{i-1}(t,\gamma x)$, appearance of two cells due to cell division in the $(i-1)$th subpopulation with division rate $\alpha_{i-1}(t)$. The factor $\gamma \in (1,2]$ is the rate of label dilution due to cell division (cf. \cite{BanksSut2010,LuzyaninaRoo2009}).
\end{itemize}
It has to be emphasized that the division rates $\alpha_i(t): \mathbb{R}_+ \rightarrow \mathbb{R}_+$ as well as the death rates ${\beta_i(t): \mathbb{R}_+ \rightarrow \mathbb{R}_+}$ may depend on division number $i$ and time $t$. To ensure existence and uniqueness of the solutions we require $\alpha_i(t),\beta_i(t) \in \mathcal{C}^1$. As it is assumed that the labeling does not affect cell function, we do not allow $\alpha_i$ and $\beta_i$ to depend on the label concentration $x$. Furthermore, only label loss rates are considered which follow a linear degradation
\begin{align}
	\nu(x) = -k(t) x.
	\label{eq: label loss rate}
\end{align}
The time dependence of the degradation rate may be arbitrary, but mainly constant degradation processes \cite{LuzyaninaRoo2007,LuzyaninaRoo2009,BanksSut2010}, $k(t) = k \; (\text{const.})$, or Gompertz decay processes, $k(t) = c_1 e^{-c_2 t}$\cite{BanksBoc2011}, are used.

Note that by construction model~\eqref{eq: coupled PDE model for cell population} provides information about cell numbers and label density for the overall as well as for individual subpopulations. Hence, it combines advantages of common ODE models~\cite{DeBoerGan2006,RevySos2001,DeenickGet2003} and common PDE models~\cite{LuzyaninaRoo2009,LuzyaninaRoo2007,BanksSut2010} of cell populations and permits for more biologically plausible degrees of freedom than both of them. In detail, the available information are:

\textit{Number of cells in the subpopulations:}
Given $N_i(t,x)$, the number of cells contained in the $i$th subpopulation can be computed as
\begin{align}
	\bar{N}_i(t) = \int_{\mathbb{R}_+} N_i(t,x) dx.
	\label{eq: def - number of cells}
\end{align}
This number of cells may help to understand the relative contribution of subpopulations to the overall population.

\textit{Normalized label density in the subpopulations:}
Given $N_i(t,x)$, the label density within the $i$th subpopulation can be computed as
\begin{align}
	n_i(t,x) = 
	\left\{
	\begin{array}{cl}
	\dfrac{N_i(t,x)}{\bar{N}_i(t)} & \text{for } \bar{N}_i(t) > 0 \\[2ex]
	0 & \text{otherwise}.
	\end{array}
	\right.
	\label{eq: def - label density}
\end{align}
The normalized label density provides the probability of finding a cell within the $i$th subpopulation with label concentration ${\xi \in [x,x+\Delta x]}$,
\begin{align}
	\mathrm{Prob}(\xi \in [x,x+\Delta x]) = \int_x^{x+\Delta x} n_i(t,x) dx.
\end{align}

Besides the properties of the subpopulations, the model permits also the analysis of the properties of the overall population. The unnormalized label density in the overall cell population $M(t,x): \mathbb{R}_+ \times \mathbb{R}_+ \rightarrow \mathbb{R}_+$ is given by
\begin{align}
	M(t,x) = \sum_{i \in \mathbb{N}_0} N_i(t,x).
	\label{eq: label density in overall population}
\end{align}
From $M(t,x)$ the overall population size
\begin{align}
	\bar{M}(t) = \int_0^\infty M(t,x) dx = \sum_{i \in \mathbb{N}_0} \bar{N_i}(t)
	\label{eq: size of overall population}
\end{align}
and the normalized label density in the overall population
\begin{align}
	m(t,x) = 	\left\{
	\begin{array}{cl}
	\dfrac{M(t,x)}{\bar{M}(t)} = \dfrac{\sum_{i \in \mathbb{N}_0} N_i(t,x)}{\sum_{i \in \mathbb{N}_0} \bar{N_i}(t)}& \text{for } \bar{M}(t) > 0 \\[2ex]
	0 & \text{otherwise}
	\end{array}
	\right.
	\label{eq: label density of overall population}
\end{align}
can be derived. These are the two experimentally observable variables of the system. The overall population size $\bar{M}(t)$ can be determined by cell counting, while the population density $m(t,x)$ can be assessed by the labeling with CFSE or BrdU. By combining these two, $M(t,x)$ can be reconstructed. As there is currently no direct cell division marker available, experimental assessment of the subpopulation sizes or of the label distribution within the subpopulations is in general not feasible. All common experimental techniques only provide the marginalization over the division number~$i$~\cite{LuzyaninaRoo2009,LuzyaninaRoo2007,BanksSut2010}.


\section{Analysis of division- and label-structured population model}
\label{sec: Analysis of DLSP model}

Besides the advantages the DLSP model offers, its potential drawback is its complexity. The model is a system of coupled PDEs, which are in general difficult to analyze, and their simulation is often computationally demanding or even intractable. In the following it is shown that these problems can be solved for the DLSP model \eqref{eq: coupled PDE model for cell population}. The approach presented allows to efficiently compute the solution of the DLSP model, without solving a system of coupled PDEs.

\subsection{Solution of the DLSP via decomposition}
\label{subsec: Solution of the DLSP via decomposition}

In order to provide an efficient method for computing the solution of~\eqref{eq: coupled PDE model for cell population}, we define the initial number of cells
\begin{align}
	\bar{N}_{0,\init} = \int_{\mathbb{R}_+} N_{0,\init}(x) dx
\end{align}
and the initial label density
\begin{align}
	n_{0,\init}(x) = 
	\left\{
	\begin{array}{cl}
	\dfrac{N_{0,\init}(x)}{\bar{N}_{0,\init}} & \text{for} \; \bar{N}_{0,\init} > 0 \\[2ex]
	0 & \text{otherwise},
	\end{array}
	\right.
\end{align}
according to~\eqref{eq: def - number of cells} and~\eqref{eq: def - label density}. Given these definitions the following theorem holds:

\begin{theorem}
\label{theorem: decomposition}
The solution of model \eqref{eq: coupled PDE model for cell population} is
\begin{align}
	\forall i: \quad N_i(t,x) = \bar{N}_i(t) n_i(t,x)
	\label{eq: ansatz for solution}
\end{align}
in which: \\[1ex]
(i) \hspace{2mm} $\bar{N}_i(t)$ is the solution of the system of the ODE:
\begin{align}
\begin{split}
	i = 0: \hspace{2mm} &
	\frac{d \bar{N}_0}{d t} = - \left( \alpha_0(t) + \beta_0(t) \right) \bar{N}_0,\\
	\forall i \geq 1: \hspace{2mm} &
	\frac{d \bar{N}_i}{d t} = - \left( \alpha_i(t) + \beta_i(t) \right) \bar{N}_i + 2 \alpha_{i-1}(t) \bar{N}_{i-1}
	\label{eq: ODE part of ansatz} 
\end{split}
\end{align}
with initial conditions: $\bar{N}_0(0) = \bar{N}_{0,\init}$ and $\forall i \geq 1: \bar{N}_i(0) = 0$. \\[1ex]
(ii) \hspace{2mm} $n_i(t,x)$ is the solution of the PDE:
\begin{align}
\forall i: \hspace{2mm} &\frac{\partial n_i(t,x)}{\partial t} - k(t) \frac{\partial (x n_i(t,x))}{\partial x} = 0
\label{eq: PDE part of ansatz}
\end{align}
with initial conditions $\forall i:$ $n_i(0,x) \equiv \gamma^i n_{0,\init}(\gamma^i x)$.
\end{theorem}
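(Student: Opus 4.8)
The natural strategy is \emph{guess and verify}: substitute the factorized ansatz \eqref{eq: ansatz for solution} into the coupled system \eqref{eq: coupled PDE model for cell population} and check that each equation, together with its initial condition, holds whenever $\bar{N}_i$ solves the ODEs \eqref{eq: ODE part of ansatz} and $n_i$ solves the PDE \eqref{eq: PDE part of ansatz}. Since the coupled linear system has a unique solution under the stated regularity $\alpha_i,\beta_i\in\mathcal{C}^1$, exhibiting the product form as \emph{a} solution forces it to be \emph{the} solution.

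First I would insert $N_i=\bar{N}_i(t)\,n_i(t,x)$ into the left-hand side of \eqref{eq: coupled PDE model for cell population}. Because $\bar{N}_i$ is independent of $x$ and the loss rate $\nu(x)=-k(t)x$ is linear in $x$, the transport term factors as $\partial_x(\nu N_i)=-k(t)\,\bar{N}_i\,\partial_x(x\,n_i)$, while the product rule gives $\partial_t N_i=\dot{\bar{N}}_i\,n_i+\bar{N}_i\,\partial_t n_i$. Collecting terms, the two $x$-dependent contributions combine into $\bar{N}_i\big(\partial_t n_i-k(t)\,\partial_x(x\,n_i)\big)$, which vanishes by \eqref{eq: PDE part of ansatz}. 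What remains on the left is precisely $\dot{\bar{N}}_i\,n_i$.

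Substituting the ODEs \eqref{eq: ODE part of ansatz} then reproduces the right-hand side immediately for $i=0$, and for $i\geq1$ leaves a single residual influx term $2\alpha_{i-1}(t)\,\bar{N}_{i-1}(t)\,n_i(t,x)$ that must coincide with $2\gamma\alpha_{i-1}(t)\,N_{i-1}(t,\gamma x)=2\gamma\alpha_{i-1}(t)\,\bar{N}_{i-1}(t)\,n_{i-1}(t,\gamma x)$. The entire argument thus reduces to the scaling identity
\begin{equation*}
n_i(t,x)=\gamma\,n_{i-1}(t,\gamma x),\qquad i\geq1,
\end{equation*}
which I expect to be the crux of the proof. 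I would establish it by observing that the transport operator $\partial_t-k(t)\,\partial_x(x\,\cdot\,)$ is invariant under the rescaling $x\mapsto\gamma x$, $n\mapsto\gamma n$: a direct computation shows that if $n_{i-1}$ solves \eqref{eq: PDE part of ansatz}, then so does $\gamma\,n_{i-1}(t,\gamma x)$. Since the prescribed initial data satisfy $\gamma\,n_{i-1}(0,\gamma x)=\gamma\cdot\gamma^{i-1}n_{0,\init}(\gamma^{i}x)=\gamma^{i}n_{0,\init}(\gamma^{i}x)=n_i(0,x)$, uniqueness for the linear first-order PDE gives the identity for all $t$, and the residual term cancels as required.

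Finally I would verify the initial conditions of the ansatz: at $t=0$ one has $n_0(0,x)=n_{0,\init}(x)$, so $N_0(0,x)=\bar{N}_{0,\init}\,n_{0,\init}(x)=N_{0,\init}(x)$, while $\bar{N}_i(0)=0$ forces $N_i(0,x)\equiv0$ for $i\geq1$. As a consistency remark, integrating \eqref{eq: PDE part of ansatz} over $x\in\mathbb{R}_+$ shows that $\int n_i\,dx$ is conserved and equals one, so $\bar{N}_i$ and $n_i$ genuinely play the roles of subpopulation size and normalized density. The only real difficulty is the scaling identity; everything else is bookkeeping with the product rule and the linear structure of $\nu$.
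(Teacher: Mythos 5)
Your proposal is correct and follows the paper's overall strategy: substitute the product ansatz into the coupled system, use the product rule and the linearity of $\nu$ so that the PDE \eqref{eq: PDE part of ansatz} kills the $x$-dependent terms, substitute the ODEs \eqref{eq: ODE part of ansatz}, and reduce everything to the scaling identity $n_i(t,x)=\gamma\,n_{i-1}(t,\gamma x)$. Where you differ is in how that crux identity is established. The paper simply invokes the explicit analytical solution of \eqref{eq: PDE part of ansatz}, obtained later by the method of characteristics (Corollary~\ref{corollary: decomposition with exact solution for PDE} and Appendix~\ref{app sec: Analytical solution of PDE}), and checks the identity by inspection of that formula. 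You instead observe that the transport operator $\partial_t-k(t)\,\partial_x(x\,\cdot)$ is invariant under the rescaling $x\mapsto\gamma x$, $n\mapsto\gamma n$, so that $\gamma\,n_{i-1}(t,\gamma x)$ solves the same PDE as $n_i$, and since the prescribed initial data coincide, uniqueness for the linear first-order PDE yields the identity for all $t$. Your route is self-contained (it never needs the closed-form solution), and it isolates exactly why linearity of the label-loss rate is essential --- the rescaling invariance fails otherwise, which is the content of the paper's Remark~1; the paper's route is shorter given that the characteristics computation is needed anyway for Corollary~\ref{corollary: decomposition with exact solution for PDE}. Your additional bookkeeping (verifying the initial conditions of the ansatz, noting that uniqueness of the coupled system promotes ``a solution'' to ``the solution,'' and checking that $\int_{\mathbb{R}_+} n_i\,dx$ is conserved) is sound and, if anything, more careful than the paper's write-up.
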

The state variables $\bar{N}_i(t)$ and $n_i(t,x)$ of the ODE system and the PDEs correspond to the number of cells~\eqref{eq: def - number of cells} and the label density~\eqref{eq: def - label density} in the $i$th subpopulation, respectively.

\begin{proof}
To prove that Theorem~\ref{theorem: decomposition} holds, \eqref{eq: ansatz for solution} - \eqref{eq: PDE part of ansatz} are inserted in \eqref{eq: coupled PDE model for cell population} and it is shown that the resulting equation holds.  The proof is only shown for $i\geq1$, since the case $i=0$ can be treated analogously. Furthermore, for notational simplicity the dependence of $n_i(t,x)$, $\alpha_i(t)$, and $\beta_i(t)$ on $t$ and $x$ is omitted where not required.

Inserting \eqref{eq: ansatz for solution} in \eqref{eq: coupled PDE model for cell population} for $i \geq 1$ yields
\begin{align}
	\frac{\partial \bar{N}_i n_i}{\partial t} - k \frac{\partial (x \bar{N}_i n_i)}{\partial x} =
	- \left( \alpha_i + \beta_i \right) \bar{N}_i n_i(t,x) + 2\gamma \alpha_{i-1} \bar{N}_{i-1} n_{i-1}(t,\gamma x).
\label{eq: proof - step 1}
\end{align}
The left hand side of this equation can be reformulated:
\begin{equation}
\begin{aligned}
	\frac{\partial \bar{N}_i n_i}{\partial t} - k \frac{\partial (x \bar{N}_i n_i)}{\partial x}
	&= \frac{d\bar{N}_i}{dt} n_i + \bar{N}_i \frac{\partial n_i}{\partial t} - k \bar{N}_i \frac{\partial (x n_i)}{\partial x} \\
	&= \frac{d\bar{N}_i}{dt} n_i + \bar{N}_i \left(\frac{\partial n_i}{\partial t} - k \frac{\partial (x n_i)}{\partial x}\right)
	\overset{\eqref{eq: PDE part of ansatz}}{=} \frac{d\bar{N}_i}{dt} n_i.
\end{aligned}
\end{equation}
By inserting this result in \eqref{eq: proof - step 1} and substituting $d\bar{N}_i/dt$ with \eqref{eq: ODE part of ansatz}, we obtain
\begin{equation}
\begin{aligned}
	&\left(- \left( \alpha_i + \beta_i \right) \bar{N}_i + 2 \alpha_{i-1} \bar{N}_{i-1}\right) n_i(t,x) = \\
	& \hspace{3cm} - \left( \alpha_i + \beta_i \right) \bar{N}_i n_i(t,x) + 2\gamma \alpha_{i-1} \bar{N}_{i-1} n_{i-1}(t,\gamma x),
\end{aligned}
\end{equation}
which can be simplified to
\begin{align}
	n_i(t,x) = \gamma n_{i-1}(t,\gamma x).
	\label{eq: proof - final step}
\end{align}
It can be proven that this last equality holds, e.g., by using the analytical solution of \eqref{eq: PDE part of ansatz}, which can be found below. This yields that \eqref{eq: proof - final step} holds which concludes the proof of Theorem~\ref{theorem: decomposition}.
\end{proof}

\begin{remark}
Note that it can be verified that \eqref{eq: proof - final step} holds if and only if the label loss rate $\nu(t,x)$ is linear in $x$. \end{remark}

With Theorem~\ref{theorem: decomposition}, the original system of coupled PDEs can be decomposed into a system of ODEs \eqref{eq: ODE part of ansatz} and a set of decoupled PDEs \eqref{eq: PDE part of ansatz}. This means that the size of the individual subpopulations can be decoupled from the label dynamics. This already tremendously simplifies the analysis, but a further simplification is possible:

\begin{corollary}
\label{corollary: decomposition with exact solution for PDE}
The solution of model \eqref{eq: coupled PDE model for cell population} is
\begin{align}
	\forall i: \quad N_i(t,x) = \bar{N}_i(t) \gamma^i e^{-\int_0^t k(\tau) d\tau} n_{0,\init}(\gamma^i e^{\int_0^t k(\tau) d\tau} x),
	\label{eq: ansatz for solution including PDE solution}
\end{align}
in which $\bar{N}_i(t)$ is the solution of the ODE~\eqref{eq: ODE part of ansatz}.
\end{corollary}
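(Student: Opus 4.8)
The plan is to build directly on Theorem~\ref{theorem: decomposition}, which already factorizes the solution as $N_i(t,x) = \bar N_i(t)\, n_i(t,x)$ and decouples the two factors: $\bar N_i(t)$ solves the linear ODE system~\eqref{eq: ODE part of ansatz}, while $n_i(t,x)$ solves the single transport equation~\eqref{eq: PDE part of ansatz} with initial datum $n_i(0,x) \equiv \gamma^i n_{0,\init}(\gamma^i x)$. Since the corollary carries the factor $\bar N_i(t)$ through unchanged, its entire content is the closed-form solution of~\eqref{eq: PDE part of ansatz}. First I would solve that PDE explicitly, and then substitute the result into the factorization~\eqref{eq: ansatz for solution} to read off~\eqref{eq: ansatz for solution including PDE solution}.

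To solve~\eqref{eq: PDE part of ansatz} I would use the method of characteristics. Expanding the conservative flux term puts the equation into the non-conservative form
\begin{equation}
\frac{\partial n_i}{\partial t} - k(t)\, x\, \frac{\partial n_i}{\partial x} = k(t)\, n_i ,
\end{equation}
a linear advection equation with a zeroth-order source. Parametrizing the characteristics by $t$, the associated system is $\dot x = -k(t)\,x$ for the spatial curves and $\dot n_i = k(t)\, n_i$ for the amplitude carried along them. Writing $K(t) := \int_0^t k(\tau)\,d\tau$, the spatial characteristic emanating from a foot point $x_0$ at $t=0$ is $x(t) = x_0\, e^{-K(t)}$, so the characteristic reaching a given $(t,x)$ originates at $x_0 = e^{K(t)}\, x$. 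Integrating the amplitude equation along the same curve scales $n_i$ by $e^{K(t)}$ relative to its initial value, giving $n_i(t,x) = e^{K(t)}\, n_i(0, e^{K(t)} x)$.

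Substituting the initial datum $n_i(0,\cdot) = \gamma^i n_{0,\init}(\gamma^i\,\cdot)$ evaluated at the foot point $e^{K(t)} x$ yields a closed form for $n_i(t,x)$ of the form displayed in~\eqref{eq: ansatz for solution including PDE solution}, namely $n_{0,\init}$ rescaled by the factor $\gamma^i e^{K(t)}$ times a scalar amplitude; multiplying by $\bar N_i(t)$ then gives $N_i(t,x)$. I would close the argument with two checks. The first is a direct substitution of the candidate $n_i$ into~\eqref{eq: PDE part of ansatz}: by the product and chain rules the advection and source contributions cancel identically, and the initial condition is matched by construction. The second is a conservation check, $\int_{\mathbb{R}_+} n_i(t,x)\,dx = 1$ for all $t$, carried out with the substitution $u = \gamma^i e^{K(t)} x$; this confirms that $n_i$ remains a normalized density, consistent with its interpretation in~\eqref{eq: ansatz for solution}.

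I expect the main obstacle to be bookkeeping rather than analysis: the scalar amplitude produced by the source term $k(t)\,n_i$ must exactly compensate the Jacobian $\gamma^i e^{K(t)}$ arising from rescaling the argument of $n_{0,\init}$, so that total mass is preserved. Pinning down the sign and magnitude of this amplitude factor is the only delicate step, and the normalization identity above is the cleanest way to fix it. The time dependence of $k$ introduces nothing beyond replacing the product $k\,t$ by the integral $K(t)$ throughout, and the case $i=0$ follows by setting $i=0$ in every step, exactly as in the proof of Theorem~\ref{theorem: decomposition}.
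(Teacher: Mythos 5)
Your proposal is correct and follows essentially the same route as the paper: by Theorem~\ref{theorem: decomposition} the corollary reduces to solving the decoupled transport equation~\eqref{eq: PDE part of ansatz}, which the paper likewise solves by the method of characteristics (Appendix~\ref{app sec: Analytical solution of PDE}) before substituting into the factorization~\eqref{eq: ansatz for solution}. One point deserves emphasis, however: the amplitude your argument produces is \emph{not} the one printed in the corollary. Your characteristics give
\begin{equation*}
n_i(t,x) \;=\; \gamma^i e^{+\int_0^t k(\tau)\,d\tau}\, n_{0,\init}\bigl(\gamma^i e^{\int_0^t k(\tau)\,d\tau}\, x\bigr),
\end{equation*}
with a \emph{positive} exponent in the prefactor, whereas~\eqref{eq: ansatz for solution including PDE solution} carries $\gamma^i e^{-\int_0^t k(\tau)\,d\tau}$. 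Your version is the correct one: it satisfies~\eqref{eq: PDE part of ansatz} identically and preserves $\int_{\mathbb{R}_+} n_i(t,x)\,dx = 1$, while the printed prefactor yields $\int_{\mathbb{R}_+} n_i(t,x)\,dx = e^{-2\int_0^t k(\tau)\,d\tau}$ and does not solve the PDE. The discrepancy is a sign slip in the paper itself: Appendix~\ref{app sec: Analytical solution of PDE} integrates the characteristic system correctly ($n_i$ grows like $e^{+\int_0^\tau k}$ along characteristics, with foot point $x_0 = e^{+\int_0^t k}\,x$), but transcribes the amplitude with the wrong sign in the final substitution; note that the paper's own LSP solution~\eqref{eq: solution to LSP model} and the log-normal computation in Appendix~\ref{app sec: Proof that the PDE conserves log-normal distributions} are consistent only with the positive exponent. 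So the mass-conservation check you proposed --- which you rightly identified as the one delicate step --- is exactly what pins this down; the only flaw in your write-up is the assertion that your formula is ``of the form displayed'' in~\eqref{eq: ansatz for solution including PDE solution}, when in fact it corrects its sign.
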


\begin{proof}
To prove Corollary~\ref{corollary: decomposition with exact solution for PDE} note that the PDE \eqref{eq: PDE part of ansatz} is linear. Thus, the method of characteristics~\cite{Evans1998} can be employed to obtain an analytical solution (Appendix~\ref{app sec: Analytical solution of PDE}). This yields
\begin{align}
\forall i: n_i(t,x) = \gamma^i e^{-\int_0^t k(\tau) d\tau} n_{0,\init}(\gamma^i e^{\int_0^t k(\tau) d\tau} x),
\end{align}
which can be inserted into~\eqref{eq: ansatz for solution}, proving Corollary~\ref{corollary: decomposition with exact solution for PDE}. 
\end{proof}

The general solution $n_i(t,x)$ simplifies in cases of specific choices for $k(t)$. A constant degradation rate yields 
\begin{align}
\forall i: n_i(t,x) = \gamma^i e^{-k t} n_{0,\init}(\gamma^i e^{k t} x),
\end{align}
while for a Gompertz decay process one obtains,
\begin{align}
\forall i: n_i(t,x) = \gamma^i e^{- \frac{c_1}{c_2} (1 - e^{-c_2 t})} n_{0,\init}(\gamma^i e^{\frac{c_1}{c_2} (1 - e^{-c_2 t})} x).
\end{align}
Corollary~\ref{corollary: decomposition with exact solution for PDE} provides a solution for any label degradation rates, including those considered in~\cite{BanksThom2012,Thompson2012}.

By solving the decoupled PDEs analytically, the solution of the DLSP model can be obtained in terms of the solution of a system of ODEs. This reduces the complexity drastically and enables also a compact representation of the overall label density $M(t,x)$:

\begin{corollary}
\label{corollary: solution of overall label density}
The overall label density~\eqref{eq: label density in overall population} is
\begin{equation}
\begin{aligned}
	M(t,x)
	&= \sum_{i \in \mathbb{N}_0} \bar{N}_i(t) n_i(t,x) = \sum_{i \in \mathbb{N}_0} \bar{N}_i(t) \gamma^i e^{-\int_0^t k(\tau) d\tau} n_{0,\init}(\gamma^i e^{\int_0^t k(\tau) d\tau} x),
	\label{eq: solution of overall label density}
\end{aligned}
\end{equation}
in which $\bar{N}_i(t)$ is the solution of the ODE~\eqref{eq: ODE part of ansatz}.
\end{corollary}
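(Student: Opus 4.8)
The plan is to obtain this result by direct substitution, chaining the definition of the overall label density with the two preceding results and requiring no new computation. First I would start from the defining relation~\eqref{eq: label density in overall population}, $M(t,x) = \sum_{i \in \mathbb{N}_0} N_i(t,x)$, which expresses the overall density as the pointwise sum of the subpopulation densities $N_i(t,x)$.

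Next I would invoke Theorem~\ref{theorem: decomposition}, which asserts the factorization $N_i(t,x) = \bar{N}_i(t)\, n_i(t,x)$ for every $i$. Substituting this term by term gives the first equality $M(t,x) = \sum_{i \in \mathbb{N}_0} \bar{N}_i(t)\, n_i(t,x)$, with $\bar{N}_i(t)$ the solution of the ODE system~\eqref{eq: ODE part of ansatz}. To reach the second equality I would then replace each normalized density $n_i(t,x)$ by the closed form furnished by Corollary~\ref{corollary: decomposition with exact solution for PDE}, namely $n_i(t,x) = \gamma^i e^{-\int_0^t k(\tau)\,d\tau}\, n_{0,\init}(\gamma^i e^{\int_0^t k(\tau)\,d\tau} x)$. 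Inserting this into the sum produces precisely~\eqref{eq: solution of overall label density}, completing the derivation.

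The only point requiring care -- and the step I expect to be the main obstacle, insofar as there is one -- is the convergence of the infinite series and the legitimacy of performing the substitution summand by summand. I would argue that for each fixed $t$ the partial sums $\sum_i \bar{N}_i(t)$ are dominated by the total population $\bar{M}(t)$, which is finite: the ODE system~\eqref{eq: ODE part of ansatz} is linear with lower-triangular coupling (each $\bar{N}_i$ depends only on $\bar{N}_i$ and $\bar{N}_{i-1}$), so it admits a globally defined solution, and summing the equations shows $d\bar{M}/dt = \sum_i (\alpha_i - \beta_i)\bar{N}_i$, bounded by an exponential whenever the rates are uniformly bounded. Since each $n_i(t,\cdot)$ is a probability density and, for fixed $x > 0$, the argument $\gamma^i e^{\int_0^t k(\tau)\,d\tau} x$ grows without bound in $i$ so that $n_{0,\init}$ evaluated there vanishes for large $i$ (whenever $n_{0,\init}$ has bounded support or suitable decay), the series converges pointwise and the term-by-term rearrangement is justified.
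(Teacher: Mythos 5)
Your proposal is correct and takes essentially the same route as the paper: the paper's proof is precisely the direct substitution of the closed form from Corollary~\ref{corollary: decomposition with exact solution for PDE} (equivalently, Theorem~\ref{theorem: decomposition} combined with the analytical PDE solution) into the definition~\eqref{eq: label density in overall population}. Your additional discussion of series convergence is not needed for this identity, since each summand is rewritten pointwise rather than rearranged; the paper treats convergence of the sum separately in Theorem~\ref{theorem: convergence}.
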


\begin{proof}
By substituting~\eqref{eq: ansatz for solution including PDE solution} into~\eqref{eq: label density in overall population}, Corollary~\ref{corollary: solution of overall label density} is proven.
\end{proof}

Given Corollary~\ref{corollary: decomposition with exact solution for PDE} and~\ref{corollary: solution of overall label density}, it is apparent that merely the ODE system~\eqref{eq: ODE part of ansatz} has to be solved in order to compute the solution of the DLSP. This problem is approached in the remainder of this section.

\subsection{Calculation of the subpopulation sizes}
\label{subsec: Calculation of the subpopulation sizes}

In order to solve ODE system~\eqref{eq: ODE part of ansatz}, we note that the change of subpopulation $i$ only depends on the size of subpopulation $i-1$. This chain-like structure enables the solution of $\bar{N}_{i}(t)$ via recursion. By doing so, analytical solutions for the ODE system have been found for two cases~\cite{LuzyaninaMru2007,RevySos2001}:

\begin{lemma}
\label{lem: ODE solution case 1}
Given that $\forall i \in \mathbb{N}_0: \alpha_i(t) = \alpha \geq 0\ \wedge \ \beta_i(t) = \beta > 0$, the solution of \eqref{eq: ODE part of ansatz} is:
\begin{align}
	\bar{N}_i(t) = \frac{(2 \alpha t)^i}{i!} e^{-(\alpha + \beta)t} \bar{N}_{0,\init}.
	\label{eq:solution ODE case 1}
\end{align}
\end{lemma}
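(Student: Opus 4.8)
The plan is to argue by induction on the division number $i$, exploiting the chain-like (lower-triangular) coupling of system \eqref{eq: ODE part of ansatz}: under the stated assumptions the equation for $\bar{N}_i$ is forced only by $\bar{N}_{i-1}$, so once the latter is known explicitly the former reduces to a scalar, linear, first-order ODE that integrates in closed form. I would first dispose of the base case $i=0$: the homogeneous equation $d\bar{N}_0/dt = -(\alpha+\beta)\bar{N}_0$ with $\bar{N}_0(0)=\bar{N}_{0,\init}$ gives $\bar{N}_0(t)=\bar{N}_{0,\init}\,e^{-(\alpha+\beta)t}$, which matches \eqref{eq:solution ODE case 1} because $(2\alpha t)^0/0!=1$.

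For the inductive step I would assume the claimed formula for $i-1$ and substitute it into the source term of the $\bar{N}_i$ equation, obtaining
\begin{equation*}
\frac{d\bar{N}_i}{dt} + (\alpha+\beta)\,\bar{N}_i = 2\alpha\,\frac{(2\alpha t)^{i-1}}{(i-1)!}\,e^{-(\alpha+\beta)t}\,\bar{N}_{0,\init}.
\end{equation*}
Multiplying through by the integrating factor $e^{(\alpha+\beta)t}$ collapses the left-hand side to $\tfrac{d}{dt}\bigl(e^{(\alpha+\beta)t}\bar{N}_i\bigr)$ and cancels the exponential on the right, leaving a pure power of $t$. Integrating from $0$ to $t$ and using the homogeneous initial condition $\bar{N}_i(0)=0$ to eliminate the constant of integration gives $\int_0^t s^{i-1}\,ds = t^i/i$; reassembling the powers of $2\alpha$ and the factorials then reproduces $\frac{(2\alpha t)^i}{i!}\,e^{-(\alpha+\beta)t}\,\bar{N}_{0,\init}$, which closes the induction. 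Since the system is linear, this explicit construction is automatically the unique solution.

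I do not expect any genuine difficulty here: once the triangular structure is recognized the computation is elementary. The only place demanding care is the bookkeeping of the factor $2\alpha$, the factorials, and the index shift in the integration, together with the correct use of $\bar{N}_i(0)=0$ for $i\geq1$---a sign or off-by-one slip there is the one thing that would break the telescoping. As a cross-check, and an alternative that bypasses the induction, I would substitute $\bar{N}_i(t)=e^{-(\alpha+\beta)t}P_i(t)$ to strip off the common decay factor, reducing the system to the pure cascade $dP_i/dt = 2\alpha\,P_{i-1}$; the generating function $G(t,z)=\sum_{i}P_i(t)\,z^i$ then satisfies $\partial_t G = 2\alpha z\,G$, so $G(t,z)=\bar{N}_{0,\init}\,e^{2\alpha t z}$, and reading off the coefficient of $z^i$ recovers the Poisson-type weights $(2\alpha t)^i/i!$ directly.
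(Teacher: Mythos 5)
Your proof is correct, but it runs the induction in a different domain than the paper does. The paper (Appendix~\ref{app sec: Analytical solution of ODE - Case 1}) also inducts on $i$ over the triangular chain, but executes the inductive step with Laplace transforms: it verifies that the claimed solution corresponds to $\bar{\mathcal{N}}_i = (2\hat{\alpha})^i (s+\check{\alpha}+\beta)^{-(i+1)}\bar{N}_{0,\init}$, notes that the ODE for $\bar{N}_{k+1}$ becomes the algebraic relation $\bar{\mathcal{N}}_{k+1} = \frac{2\hat{\alpha}}{s+\check{\alpha}+\beta}\bar{\mathcal{N}}_k$, and closes by inverse transformation. You instead do the inductive step in the time domain with the integrating factor $e^{(\alpha+\beta)t}$, reducing it to $\int_0^t s^{i-1}\,ds = t^i/i$; your generating-function variant ($dP_i/dt = 2\alpha P_{i-1}$, $G(t,z)=\bar{N}_{0,\init}e^{2\alpha t z}$) is a third, equally valid route that exposes the Poisson structure of the weights most directly. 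Your approach is more elementary and self-contained; what the paper's transform formulation buys is twofold. First, the paper actually proves a slight generalization in which the outflow rate $\check{\alpha}+\beta$ and the inflow coefficient $2\hat{\alpha}$ carry \emph{different} constants $\check{\alpha}\neq\hat{\alpha}$; this is not pedantry, since exactly that generalized solution is reused as the bounding system~\eqref{eq: bounding ODE part of ansatz} in the proofs of Theorems~\ref{theorem: convergence} and~\ref{theorem: truncation error}. Your integrating-factor computation adapts verbatim to that case (replace $\alpha+\beta$ by $\check{\alpha}+\beta$ in the exponent and $2\alpha$ by $2\hat{\alpha}$ in the source), but as written you only cover the stated case $\check{\alpha}=\hat{\alpha}=\alpha$. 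Second, the Laplace machinery set up here is what the paper then leverages in Lemma~\ref{lem: ODE solution case 2}, where division-number-dependent rates are handled by partial fractions of the transform --- a setting where the time-domain integration becomes messier.
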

This result has been derived in \cite{RevySos2001}, where the authors studied this ODE system to model the number of cells that have undergone a certain number of divisions, without modeling label dynamics. The derivation as provided in Appendix~\ref{app sec: Analytical solution of ODE - Case 1} is generalized for later use.

\begin{lemma}
\label{lem: ODE solution case 2}
Given that $\forall i \in \mathbb{N}_0: \alpha_i(t) = \alpha_i \geq 0 \ \wedge \ \beta_i(t) = \beta_i > 0$ and $\forall i,j \in \mathbb{N}_0, i \neq j: \alpha_i + \beta_i \neq \alpha_j + \beta_j$, then the solution of \eqref{eq: ODE part of ansatz} is:
\begin{align}
\begin{split}
	i = 0: \hspace{1mm} &
	\bar{N}_0(t) = e^{-(\alpha_0 + \beta_0)t} \bar{N}_{0,\init} \\
	\forall i \geq 1: \hspace{1mm} &
	\bar{N}_i(t) = 2^i \left(\prod_{j=1}^{i} \alpha_{j-1}\right) D_i(t) \bar{N}_{0,\init}
\end{split}
\label{eq:solution ODE case 2}
\end{align}
in which
\begin{align*}
	D_i(t) = \sum_{j=0}^{i}
	\left[
	\left(\prod_{\substack{k=0\\k\neq j}}^i ((\alpha_k + \beta_k) - (\alpha_j + \beta_j))\right)^{-1} e^{-(\alpha_j + \beta_j)t}
	\right].
\end{align*}
\end{lemma}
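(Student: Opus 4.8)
The plan is to exploit two features of the system that the hypotheses of Lemma~\ref{lem: ODE solution case 2} give us: the coefficients $\alpha_i,\beta_i$ are constant in $t$, and the chain has a strictly lower-triangular (``cascade'') structure in which $\bar{N}_i$ is forced only by $\bar{N}_{i-1}$. Writing $\lambda_i := \alpha_i + \beta_i$ for brevity (so the $\lambda_i$ are pairwise distinct by hypothesis, and $\lambda_i>0$ since $\beta_i>0$), I would apply the Laplace transform in $t$ to \eqref{eq: ODE part of ansatz}. With $\hat{N}_i(s) := \int_0^\infty e^{-st}\bar{N}_i(t)\,dt$ and the initial data $\bar{N}_0(0)=\bar{N}_{0,\init}$, $\bar{N}_i(0)=0$ for $i\ge 1$, the equations collapse to the algebraic recursion
\begin{align*}
\hat{N}_0(s) = \frac{\bar{N}_{0,\init}}{s+\lambda_0},
\qquad
\hat{N}_i(s) = \frac{2\alpha_{i-1}}{s+\lambda_i}\,\hat{N}_{i-1}(s)\quad (i\ge 1).
\end{align*}

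Iterating this recursion yields a single closed form,
\begin{align*}
\hat{N}_i(s) = 2^i\Big(\prod_{j=1}^{i}\alpha_{j-1}\Big)\,\bar{N}_{0,\init}\,\frac{1}{\prod_{k=0}^{i}(s+\lambda_k)},
\end{align*}
so the whole problem reduces to inverting the rational function $1\big/\prod_{k=0}^i (s+\lambda_k)$. This is where the distinctness assumption $\lambda_i\neq\lambda_j$ does its work: it guarantees $i+1$ \emph{simple} poles at $s=-\lambda_k$, so the partial-fraction expansion is the clean
\begin{align*}
\frac{1}{\prod_{k=0}^i (s+\lambda_k)} = \sum_{j=0}^i \frac{c_j}{s+\lambda_j},
\qquad
c_j = \Big(\prod_{\substack{k=0\\k\neq j}}^i(\lambda_k-\lambda_j)\Big)^{-1},
\end{align*}
the coefficients being read off by the residue (cover-up) rule $c_j=\lim_{s\to-\lambda_j}(s+\lambda_j)\big/\prod_{k=0}^i(s+\lambda_k)$. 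Applying the elementary inverse transform $1/(s+\lambda_j)\mapsto e^{-\lambda_j t}$ term by term then produces exactly $\sum_j c_j e^{-\lambda_j t}=D_i(t)$, and substituting back reproduces \eqref{eq:solution ODE case 2}; the $i=0$ line is the trivial base case.

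The step carrying the real content is the partial-fraction inversion, and the point worth emphasising is \emph{why} the hypothesis $\lambda_i\neq\lambda_j$ is essential rather than merely convenient: if two $\lambda_k$ coincided, that pole would have order $\ge 2$ and the inverse transform would acquire $t\,e^{-\lambda t}$ terms, so the pure-exponential form of $D_i(t)$ would break down — this is precisely the degenerate situation that Lemma~\ref{lem: ODE solution case 1} treats separately. An alternative, matching the ``recursion'' viewpoint announced in the text, is induction on $i$: solve the scalar linear ODE for $\bar{N}_i$ with integrating factor $e^{\lambda_i t}$, giving $\bar{N}_i(t)=2\alpha_{i-1}\int_0^t e^{-\lambda_i(t-s)}\bar{N}_{i-1}(s)\,ds$, insert the inductive form of $\bar{N}_{i-1}$, and use $\int_0^t e^{-\lambda_i(t-s)}e^{-\lambda_j s}\,ds = (e^{-\lambda_j t}-e^{-\lambda_i t})/(\lambda_i-\lambda_j)$. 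In that route the genuine obstacle is a bookkeeping identity: matching the $e^{-\lambda_i t}$ coefficient forces $\sum_{j=0}^i c_j=0$ for every $i\ge 1$, i.e. the ``divided difference of a constant vanishes'' identity $\sum_{j=0}^i\big(\prod_{k\neq j}(\lambda_j-\lambda_k)\big)^{-1}=0$. I would favour the Laplace route precisely because this identity is built into the exactness of the simple-pole partial-fraction expansion and need not be verified separately.
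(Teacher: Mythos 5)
Your proposal is correct and follows essentially the same route as the paper: the paper also works in the Laplace domain, obtaining the algebraic recursion $\bar{\mathcal{N}}_{k+1} = \frac{2\alpha_k}{s+\alpha_{k+1}+\beta_{k+1}}\bar{\mathcal{N}}_k$ (framed there as an induction on $i$) and then inverting $1/\prod_{k=0}^i(s+\alpha_k+\beta_k)$ via the same simple-pole partial-fraction expansion, with coefficients determined by evaluating at $s=-(\alpha_k+\beta_k)$ exactly as in your cover-up rule. Your added remarks on why distinctness of the $\alpha_i+\beta_i$ is essential, and the alternative integrating-factor induction, go slightly beyond but do not change the argument.
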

Solution~\eqref{eq:solution ODE case 2} was first stated in \cite{LuzyaninaMru2007} and for completeness the proof is provided in Appendix~\ref{app sec: Analytical solution of ODE - Case 2}. It basically employs mathematical induction in the frequency domain, exploiting properties of the partial fraction under the provided assumptions. Despite the prerequisites, this result is quite powerful as for almost all cases of time invariant division number dependent parameters $\alpha_i$ and $\beta_i$ the ODE system~\eqref{eq: ODE part of ansatz} can be solved analytically.

In cases in which neither prerequisites for Lemma~\ref{lem: ODE solution case 1} nor~\ref{lem: ODE solution case 2} hold, then the solution of~\eqref{eq: ODE part of ansatz} can still be computed using numerical integration. This is possible if only the sizes of the first $S$ subpopulations $\bar{N}_0(t)$, $\bar{N}_1(t)$, $\ldots$, $\bar{N}_{S-1}(t)$, are of interest, where $S$ is finite. 

\subsection{Truncation of division numbers in the population model}
\label{subsec: Truncation of population model}

In Section~\ref{subsec: Solution of the DLSP via decomposition} a decomposition approach has been described to decouple the size of the subpopulations from the label distribution in the individual subpopulations. While this simplifies the computation of the properties of individual subpopulations drastically, the analysis of the overall label density and of the overall population size still requires the calculation of an infinite sum~\eqref{eq: solution of overall label density}. Even in cases for which the individual subpopulation sizes are available analytically (see~\eqref{eq:solution ODE case 1} and~\eqref{eq:solution ODE case 2}), we could not derive a closed form solution for $M(t,x)$. Therefore, in this section we present a method to find an approximation of $M(t,x)$ of the form
\begin{align}
	\hat{M}_S(t,x) &= \sum_{i = 0}^{S-1} N_i(t,x) = \sum_{i = 0}^{S-1} \bar{N}_i(t) n_i(t,x)
	\label{eq: approximated solution of overall label density}
\end{align}
with truncation index $S\geq 0$. Instead of considering an infinite number of subpopulations, only the first $S$ subpopulations are taken into account. While it might be argued that a bound $S$ can be determined from experimental data collected in proliferation assays~\cite{BanksThom2012,Thompson2012}, this is not true for long times. In case of long observation intervals, the autofluorescence -- which will be discussed in Section~\ref{sec: Computation of measured label distribution} -- avoids an estimation of $S$. Thus, reliable selection rules for the truncation index $S$ are necessary.

In order to approximate $M(t,x)$ with arbitrary precision by the truncated sum $\hat{M}_S(t,x)$, convergence of~\eqref{eq: label density in overall population} and~\eqref{eq: size of overall population} with respect to the subpopulation index $i$ is required and can be proven:

\begin{theorem}
\label{theorem: convergence}
	The sums~\eqref{eq: label density in overall population} converge for any finite time $T$, if there exist
	\begin{equation}
	\begin{aligned}
	\alpha_{\sup} &= \sup_{t \in [0,T], i \in \mathbb{N}_0} \alpha_i(t) \geq 0, \\
	\alpha_{\inf} &= \inf_{t \in [0,T], i \in \mathbb{N}_0} \alpha_i(t) \geq 0, \\
	\beta_{\inf} &= \inf_{t \in [0,T], i \in \mathbb{N}_0} \beta_i(t) > 0.
	\end{aligned}
	\end{equation}
\end{theorem}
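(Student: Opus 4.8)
The plan is to exploit the decomposition $N_i(t,x) = \bar{N}_i(t)\,n_i(t,x)$ from Theorem~\ref{theorem: decomposition} together with the closed form for $n_i$ from Corollary~\ref{corollary: decomposition with exact solution for PDE}, and to reduce convergence of the function series~\eqref{eq: label density in overall population} to convergence of a numerical majorant series. Since every term $N_i(t,x)$ is nonnegative, it suffices to produce a summable upper bound that is uniform in $x$ (indeed in $t$) on $[0,T]$ and to invoke the Weierstrass $M$-test.

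First I would bound the subpopulation sizes. Rewriting the ODE~\eqref{eq: ODE part of ansatz} for $i\geq 1$ with the integrating factor $\exp(\int_0^t(\alpha_i+\beta_i)\,d\tau)$ gives the Duhamel representation
\begin{align*}
\bar{N}_i(t) = \int_0^t \exp\!\Big(-\!\int_s^t (\alpha_i(\tau)+\beta_i(\tau))\,d\tau\Big)\, 2\alpha_{i-1}(s)\,\bar{N}_{i-1}(s)\,ds .
\end{align*}
Using $\alpha_i,\beta_i\geq 0$, the exponential factor is at most $1$, and $\alpha_{i-1}(s)\leq\alpha_{\sup}$ on $[0,T]$, so $\bar{N}_i(t)\leq 2\alpha_{\sup}\int_0^t\bar{N}_{i-1}(s)\,ds$. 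Starting from $\bar{N}_0(t)\leq\bar{N}_{0,\init}$ (its exponential prefactor is $\leq 1$), an induction on $i$ then yields the Poisson-type estimate
\begin{align*}
\bar{N}_i(t) \leq \frac{(2\alpha_{\sup}\,t)^i}{i!}\,\bar{N}_{0,\init}, \qquad t\in[0,T].
\end{align*}
This is exactly the constant-rate solution of Lemma~\ref{lem: ODE solution case 1} with $\alpha=\alpha_{\sup}$ and $\beta=0$, i.e.\ the comparison system in which death is switched off and division is maximal; the role of $\alpha_{\sup}$ is precisely to make this majorant finite.

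Next I would bound the label densities. From Corollary~\ref{corollary: decomposition with exact solution for PDE}, $n_i(t,x)=\gamma^i e^{-\int_0^t k}\,n_{0,\init}(\gamma^i e^{\int_0^t k}x)$, so $\sup_x n_i(t,x)\leq\gamma^i\,\|n_{0,\init}\|_\infty$ (the degradation prefactor $e^{-\int_0^t k}\leq 1$), under the standing assumption that $n_{0,\init}$ is bounded. Combining the two estimates,
\begin{align*}
0 \leq N_i(t,x) \leq \frac{(2\gamma\alpha_{\sup}\,T)^i}{i!}\,\bar{N}_{0,\init}\,\|n_{0,\init}\|_\infty \qquad \text{for all } (t,x)\in[0,T]\times\mathbb{R}_+ .
\end{align*}
Because $\sum_i (2\gamma\alpha_{\sup}T)^i/i! = e^{2\gamma\alpha_{\sup}T}<\infty$ (recall $\gamma\leq 2$), the Weierstrass $M$-test shows that~\eqref{eq: label density in overall population} converges absolutely and uniformly in $x$ on $[0,T]$. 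Integrating the majorant, or simply using $\int_{\mathbb{R}_+} n_i\,dx = 1$, gives $\bar{M}(t)=\sum_i\bar{N}_i(t)\leq\bar{N}_{0,\init}e^{2\alpha_{\sup}t}<\infty$, so~\eqref{eq: size of overall population} converges as well.

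I expect the main obstacle to be the derivation of the subpopulation bound: the ODE system is coupled along the division chain, so the estimate cannot be read off termwise and must be propagated. The cleanest route is the Duhamel/induction argument above, which keeps nonnegativity and the chain structure explicit; an alternative is a comparison-principle argument invoking the cooperative (monotone) structure of~\eqref{eq: ODE part of ansatz}, but that requires justifying monotone dependence on the rates and is less self-contained. A minor technical point to flag is that boundedness of $n_{0,\init}$ is what upgrades the statement to uniform convergence in $x$ for~\eqref{eq: label density in overall population}; convergence of the population size~\eqref{eq: size of overall population} needs only $\alpha_{\sup}<\infty$ together with normalization of $n_i$, and in particular uses neither $\alpha_{\inf}$ nor $\beta_{\inf}$, which instead enter the sharper truncation-error bound developed afterwards.
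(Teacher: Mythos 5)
Your proof is correct, and at the top level it follows the same strategy as the paper: dominate each term of $\sum_{i} N_i(t,x)$ by a Poisson-type majorant proportional to $\frac{(2\gamma\alpha_{\sup}t)^i}{i!}$ and conclude with a comparison argument (your Weierstrass $M$-test plays the role of the paper's comparison theorem for series). The genuine difference is how the key estimate on the subpopulation sizes is obtained. The paper constructs an auxiliary constant-coefficient bounding system with outflow rate $\alpha_{\inf}+\beta_{\inf}$ and inflow rate $2\alpha_{\sup}$, solves it in closed form by a Laplace-transform induction (Appendix~\ref{app sec: Analytical solution of ODE - Case 1}), and justifies $\bar{N}_i(t)\le \bar{B}_i(t)=\frac{(2\alpha_{\sup}t)^i}{i!}e^{-(\alpha_{\inf}+\beta_{\inf})t}\bar{N}_{0,\init}$ by invoking M\"uller's comparison theorem for ODEs. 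You instead obtain $\bar{N}_i(t)\le\frac{(2\alpha_{\sup}t)^i}{i!}\bar{N}_{0,\init}$ directly from the Duhamel representation and an induction along the division chain, bounding the exponential kernel by $1$. Your route is more elementary and self-contained (no auxiliary system, no ODE comparison principle to justify), at the price of a coarser majorant: dropping $e^{-(\alpha_{\inf}+\beta_{\inf})t}$ is harmless for convergence, but that factor is exactly what the paper reuses to obtain the quantitative truncation bound of Theorem~\ref{theorem: truncation error}, so the paper's bounding system does double duty. Your closing observation is also correct and sharpens the statement: convergence needs only $\alpha_{\sup}<\infty$ and boundedness of $n_{0,\init}$, while $\alpha_{\inf}$ and $\beta_{\inf}$ enter only the error bound. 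One caveat: you bound the density prefactor by $e^{-\int_0^t k(\tau)\,d\tau}\le 1$, taking Corollary~\ref{corollary: decomposition with exact solution for PDE} at face value, whereas the paper's own proof carries a growth factor $e^{kt}$ (the sign conventions in Corollary~\ref{corollary: decomposition with exact solution for PDE} and Appendix~\ref{app sec: Analytical solution of PDE} are mutually inconsistent, apparently a typo); either way the factor is bounded by a constant on $[0,T]$, so your argument survives with at most an extra factor $e^{\int_0^T k(\tau)\,d\tau}$ in the majorant.
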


The proof of Theorem~\ref{theorem: convergence} is provided in Appendix~\ref{app sec: Proof of convergence Theorem}. It employs a system of ODEs of which its states are an upper bound for the states of~\eqref{eq: ODE part of ansatz}, and which can be solved analytically. Given these upper bounds the comparison theorem for series~\cite{Knopp1964} can be used to verify convergence. Note that Theorem~\ref{theorem: convergence} is powerful as it holds for all biological plausible functions $\alpha_i(t)$ and $\beta_i(t)$.

Given convergence the question arises how large the truncation index $S$ must be to ensure a predefined error bound at a given time. For the considered system it can be shown that:

\begin{theorem}
\label{theorem: truncation error}
	Given a truncation index $S$ and a time $T$, as well as  $\alpha_{\inf}$, $\alpha_{\sup}$, and $\beta_{\inf}$ as defined in Theorem~\ref{theorem: convergence}, the truncation error is upper bounded by $E_S(T)$:
 	\begin{equation}
		\frac{||M(T,x) - \hat{M}_S(T,x)||_1}{||M(0,x)||_1} \leq E_S(T) = \left(e^{2 \alpha_{\sup} T} - \sum_{i = 0}^{S-1} \frac{(2 \alpha_{\sup}T)^i}{i!}\right) e^{-(\alpha_{\inf}+\beta_{\inf})T}.
		\label{eq: upper bound for truncation error}
	\end{equation}		
\end{theorem}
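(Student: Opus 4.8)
The plan is to reduce the $L^1$ truncation error to a tail sum of subpopulation sizes and then to dominate that tail by a comparison ODE system with constant, extremal rates whose solution is explicit. First I would exploit non-negativity of the densities. Since every $N_i(T,x) \geq 0$, the remainder $M(T,x) - \hat{M}_S(T,x) = \sum_{i=S}^{\infty} N_i(T,x)$ is itself non-negative, so that $||M(T,x) - \hat{M}_S(T,x)||_1 = \int_0^\infty \sum_{i=S}^{\infty} N_i(T,x)\,dx$. By Tonelli's theorem (all integrands non-negative, convergence guaranteed by Theorem~\ref{theorem: convergence}) the sum and integral may be interchanged, and using $\int_0^\infty N_i(T,x)\,dx = \bar{N}_i(T)$ from~\eqref{eq: def - number of cells} this gives $||M(T,x) - \hat{M}_S(T,x)||_1 = \sum_{i=S}^{\infty} \bar{N}_i(T)$. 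The same identity at $t=0$, together with the initial conditions $\bar{N}_0(0)=\bar{N}_{0,\init}$ and $\bar{N}_i(0)=0$ for $i\geq 1$, yields $||M(0,x)||_1 = \bar{N}_{0,\init}$. Thus the claim reduces to the purely ODE-level estimate $\sum_{i=S}^{\infty} \bar{N}_i(T) \leq E_S(T)\,\bar{N}_{0,\init}$.

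Next I would construct a dominating system. Define $\hat{N}_i(t)$ by the constant-rate ODEs obtained from~\eqref{eq: ODE part of ansatz} by replacing each loss rate $\alpha_i(t)+\beta_i(t)$ with its infimum $\alpha_{\inf}+\beta_{\inf}$ and each source rate $\alpha_{i-1}(t)$ with its supremum $\alpha_{\sup}$, keeping the same initial data. This is the same dominating chain already employed in the proof of Theorem~\ref{theorem: convergence}, and by the constant-rate solution of Lemma~\ref{lem: ODE solution case 1} in its generalized form (now with distinct source coefficient $2\alpha_{\sup}$ and loss coefficient $\alpha_{\inf}+\beta_{\inf}$) it admits the closed form $\hat{N}_i(t) = \frac{(2\alpha_{\sup}t)^i}{i!}\,e^{-(\alpha_{\inf}+\beta_{\inf})t}\,\bar{N}_{0,\init}$.

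The heart of the argument is the comparison inequality $\bar{N}_i(t) \leq \hat{N}_i(t)$ for all $i$ and all $t \in [0,T]$, which I would prove by induction on the division number, exploiting the chain structure of~\eqref{eq: ODE part of ansatz}. The base case $i=0$ follows at once from $\int_0^t(\alpha_0+\beta_0)\,d\tau \geq (\alpha_{\inf}+\beta_{\inf})t$. For the inductive step I would write $\bar{N}_i$ in variation-of-constants (Duhamel) form $\bar{N}_i(t) = \int_0^t e^{-\int_s^t(\alpha_i+\beta_i)\,d\tau}\,2\alpha_{i-1}(s)\,\bar{N}_{i-1}(s)\,ds$ and bound the integrand factor by factor: $e^{-\int_s^t(\alpha_i+\beta_i)\,d\tau} \leq e^{-(\alpha_{\inf}+\beta_{\inf})(t-s)}$, $2\alpha_{i-1}(s)\leq 2\alpha_{\sup}$, and $\bar{N}_{i-1}(s)\leq \hat{N}_{i-1}(s)$ by the induction hypothesis, every factor being non-negative. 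The resulting integral is exactly the Duhamel representation of $\hat{N}_i(t)$, closing the induction.

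Finally I would sum the tail of the explicit bound, $\sum_{i=S}^{\infty}\hat{N}_i(T) = e^{-(\alpha_{\inf}+\beta_{\inf})T}\,\bar{N}_{0,\init}\sum_{i=S}^{\infty}\frac{(2\alpha_{\sup}T)^i}{i!}$, and recognize the series as $e^{2\alpha_{\sup}T} - \sum_{i=0}^{S-1}\frac{(2\alpha_{\sup}T)^i}{i!}$, which reproduces $E_S(T)$ after dividing through by $\bar{N}_{0,\init}=||M(0,x)||_1$. The main obstacle is the comparison inequality: monotone dependence on the rates is intuitively clear, but it must be justified with care because the loss rate multiplies $\bar{N}_i$ itself, so a naive pointwise comparison of derivatives is circular. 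The Duhamel representation sidesteps this cleanly, and it is the step I would take most care to write out in full.
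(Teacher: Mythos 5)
Your proof is correct and follows the same overall strategy as the paper's (Appendix~F): reduce the $L^1$ error to the tail sum $\sum_{i=S}^\infty \bar{N}_i(T)$ using non-negativity and the unit integral of the normalized densities $n_i$, dominate the $\bar{N}_i(t)$ by a constant-rate comparison chain with loss rate $\alpha_{\inf}+\beta_{\inf}$ and source rate $2\alpha_{\sup}$ (your $\hat{N}_i$ is exactly the paper's bounding system $\bar{B}_i$ from the proof of Theorem~\ref{theorem: convergence}), and then complete the tail of the exponential series to obtain $E_S(T)$. The one point where you genuinely diverge is the justification of the comparison inequality $\bar{N}_i(t)\leq\hat{N}_i(t)$: the paper presents this as intuitive (smaller outflow, larger inflow, same initial data) and delegates the rigorous step to a citation of M\"uller's comparison theorem, whereas you prove it from scratch by induction along the chain, writing $\bar{N}_i(t)=\int_0^t e^{-\int_s^t(\alpha_i+\beta_i)\,d\tau}\,2\alpha_{i-1}(s)\,\bar{N}_{i-1}(s)\,ds$ in variation-of-constants form and bounding kernel, source rate, and previous state factor by factor, then recognizing the resulting integral as the Duhamel representation of $\hat{N}_i(t)$. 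Your route is more elementary and self-contained: it needs no quasimonotonicity machinery, it makes the non-negativity requirements explicit, and it correctly pinpoints why a naive pointwise comparison of derivatives would be circular (the loss term multiplies $\bar{N}_i$ itself). The paper's route is shorter on the page but outsources precisely the step you identify as the heart of the argument; either way the explicit solution of the comparison system (the paper's generalized Lemma~\ref{lem: ODE solution case 1}, Appendix~B) and the final tail summation are identical.
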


To prove Theorem~\ref{theorem: truncation error}, we show that $||M(t,x) - \hat{M}(t,x)||_1 = \sum_{i=S+1}^\infty \bar{N}_i(t)$. This sum can be upper bounded for all biologically plausible functions $\alpha_i(t)$ and $\beta_i(t)$ using the ODE system employed to verify Theorem~\ref{theorem: convergence}. The full proof is provided in Appendix~\ref{app sec: Proof of truncation error Theorem}. Note that, if $\forall i \in \mathbb{N}_0: \alpha_i(t) = \alpha \ \wedge \ \beta_i(t) = \beta$, the bound~\eqref{eq: upper bound for truncation error} is precise and equality holds.

\begin{remark}
	In this work we considered an error bound which is relative to the initial condition. This is reasonable as for this system the superposition principle holds and the relative truncation error $\frac{||M(T,x) - \hat{M}_S(T,x)||_1}{||M(0,x)||_1}$ is thus independent of $||M(0,x)||_1$.
\end{remark}

Given Theorem~\ref{theorem: truncation error}, an upper bound $S$ can be derived which ensure that a relative error is bounded by $\epsilon$:

\begin{corollary}
\label{corollary: bound for S}
	Assuming that $\alpha_{\inf}$, $\alpha_{\sup}$, and $\beta_{\inf}$ exist as defined in Theorem~\ref{theorem: convergence}, the error bound
 	\begin{equation}
		\frac{||M(T,x) - \hat{M}_S(T,x)||_1}{||M(0,x)||_1} \leq \epsilon
		\label{eq: error bound}
	\end{equation}		
	holds if 
 	\begin{equation}
		\left(e^{2 \alpha_{\sup} T} - \sum_{i = 0}^{S-1} \frac{(2 \alpha_{\sup}T)^i}{i!}\right) e^{-(\alpha_{\inf}+\beta_{\inf})T} \leq \epsilon.
		\label{eq: required truncation index}
	\end{equation}		
\end{corollary}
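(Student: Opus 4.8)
The plan is to invoke Theorem~\ref{theorem: truncation error} directly and chain its conclusion with the hypothesis by transitivity; no new analysis is needed. Theorem~\ref{theorem: truncation error} already establishes, under the existence of $\alpha_{\inf}$, $\alpha_{\sup}$, and $\beta_{\inf}$, that
\begin{equation*}
\frac{||M(T,x) - \hat{M}_S(T,x)||_1}{||M(0,x)||_1} \leq E_S(T),
\end{equation*}
with $E_S(T)$ given explicitly by~\eqref{eq: upper bound for truncation error}. The first step is simply to observe that the left-hand side of the hypothesis~\eqref{eq: required truncation index} is verbatim the expression $E_S(T)$; the corollary's assumptions on $\alpha_{\inf}$, $\alpha_{\sup}$, $\beta_{\inf}$ are exactly those required to apply the theorem.

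The second step is to combine the two inequalities. Assuming~\eqref{eq: required truncation index}, that is $E_S(T) \leq \epsilon$, transitivity yields
\begin{equation*}
\frac{||M(T,x) - \hat{M}_S(T,x)||_1}{||M(0,x)||_1} \leq E_S(T) \leq \epsilon,
\end{equation*}
which is precisely the asserted bound~\eqref{eq: error bound}. This completes the argument. I therefore do not expect any genuine obstacle here: the statement is an immediate consequence of the a priori error bound, and the content of the corollary lies entirely in reinterpreting $E_S(T) \leq \epsilon$ as a selection rule for the truncation index $S$.

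If one wishes to make the result operationally meaningful -- namely to guarantee that a finite $S$ satisfying~\eqref{eq: required truncation index} always exists -- I would add a short monotonicity and limit argument, which is the only place where a little care is warranted. Increasing $S$ to $S+1$ subtracts the additional nonnegative term $\tfrac{(2\alpha_{\sup}T)^{S}}{S!}\,e^{-(\alpha_{\inf}+\beta_{\inf})T}$ from the bound, so $E_S(T)$ is nonincreasing in $S$; moreover the partial sums $\sum_{i=0}^{S-1}\tfrac{(2\alpha_{\sup}T)^i}{i!}$ converge to $e^{2\alpha_{\sup}T}$, whence $E_S(T)\to 0$ as $S\to\infty$. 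Consequently, for any $\epsilon>0$ and finite $T$ there is a smallest index $S$ for which~\eqref{eq: required truncation index} first holds, and that $S$ can be determined by simply incrementing $S$ until the inequality is met.
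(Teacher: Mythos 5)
Your proposal is correct and matches the paper's own proof exactly: the corollary is obtained by applying Theorem~\ref{theorem: truncation error} and chaining $\frac{||M(T,x) - \hat{M}_S(T,x)||_1}{||M(0,x)||_1} \leq E_S(T) \leq \epsilon$ by transitivity. Your added remark on monotonicity of $E_S(T)$ in $S$ and its convergence to zero is a sound (and useful) supplement, echoing the paper's discussion that $S$ must be found iteratively, but it is not needed for the corollary itself.
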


\begin{proof}
Corollary~\ref{corollary: bound for S} follows directly from Theorem~\ref{theorem: truncation error}, using $\frac{||M(T,x) - \hat{M}(T,x)||_1}{||M(0,x)||_1} \leq E_S(T) \leq \epsilon$.
\end{proof}

Despite the generality of Theorem~\ref{theorem: truncation error} and Corollary~\ref{corollary: bound for S} for the considered system class, it suffers the small disadvantage that no explicit expression for $S$ has been found. Rather, the minimum truncation index $S$ which is required to ensure a certain error bound has to be found iteratively by increasing or decreasing $S$ based on the current error. Fortunately, this search is computationally cheap as it is not necessary to solve a system of ODEs or PDEs, but the error bound is available analytically.

A study of the a priori error bound~\eqref{eq: required truncation index} shows that if the acceptable relative error $\epsilon$ is kept constant, the truncation index $S$ grows monotonically as function of the final simulation time. This is due to the exponential growth of $e^{2 \alpha_{\sup} T}$ vs. the polynomial growth $\sum_{i = 0}^{S-1} \frac{(2 \alpha_{\sup}T)^i}{i!}$. Merely for cases in which $2 \alpha_{\sup} \leq \alpha_{\inf}+\beta_{\inf}$, $S$ does not have to increase arbitrarily over time but stays bounded, as under these conditions the population dies out. Note that the increase of $S$ is often not critical. Due to label dilution in general only the first seven or eight cell divisions can be observed~\cite{HawkinsHom2007}, which limits the timespan of interest and therefore the required truncation index~$S$.

Aside from an approximation of the population density $M(t,x)$, also approximations for the overall population size $\bar{M}(t)$ and of the normalized overall label density $m(t,x)$ may be necessary to compare model predictions to measurements. Accordingly to~\eqref{eq: size of overall population} and~\eqref{eq: label density of overall population}, plausible choices for these approximations are
\begin{align}
	\hat{\bar{M}}_S(t) = \int_{\mathbb{R}_+} \hat{M}_S(t,x) dx \quad \text{and} \quad
	\hat{m}_S(t,x) = \frac{\hat{M}_S(t,x)}{\hat{\bar{M}}_S(t)}.
\end{align}
Theorems~\ref{theorem: convergence} and~\ref{theorem: truncation error} can be extended to verify convergence and determine truncation errors for these quantities. For $\hat{\bar{M}}_S(t)$ this is straightforward, while for $\hat{m}_S(t,x)$ it is slightly more complicated. The proofs are not provided here as this is beyond the scope of this work and would reduce the readability.
\\[3ex]
To summarize, in this section the DLSP model has been analyzed in-depth. We have shown that for a very general class of division and death rates $\alpha_i(t)$ and $\beta_i(t)$, the solution of the DLSP can be computed by solving a system of ODEs. This ODE system has an analytical solution for a rather general class of time independent parameterizations. By determining rigorous error bounds, we furthermore enable the calculation of the required truncation index to achieve a predefined precision. As shown later, this will allow for many systems to predict the population response employing a low-dimensional ODE system.

\section{Comparison of different proliferation models}
\label{sec: Comparison of different proliferation models}

In the last section we have analyzed the DLSP model and outlined a method to solve it. The question which remained open is how the DLSP model and its solution relate to existing population models for cell proliferation. To answer this question we confine ourselves to the in our opinion most common models, the exponential growth model (EGM), the division-structured population model (DSP) and the label-structured population model (LSP):
\begin{itemize}
\item[] EGM: An ODE describing the dynamics of the overall population size~\cite{ZwieteringJon1990}.  
\item[] DSP: A system of ODEs describing the dynamics of the number of cells contained in the individual subpopulations, where the subpopulations are defined via a common number of cell divisions~\cite{RevySos2001,DeBoerGan2006}.
\item[] LSP: A PDE describing the dynamics of the label density in the overall population~\cite{BanksSut2010,LuzyaninaRoo2009,LuzyaninaRoo2007}.
\end{itemize}
These models are used in many more publications than cited here and various extensions of these models exist.

\subsection{Relation between EGM and DLSP}
\label{ref: Relation between DLSP and EGM}
The EGM is the simplest available model which describes population dynamics. It has only one state variable, which corresponds to the size of the overall cell population. In general, the EGM is written as
\begin{equation}
	\frac{d\bar{M}^{\EGM}}{dt} = \phi(t) \bar{M}^{\EGM}(t), \quad \bar{M}^{\EGM}(0) = \bar{M}_\init^{\EGM},
\end{equation}
in which $\phi(t)$ is the effective growth rate. A common choice is $\phi(t) = e^{\phi_1 - \phi_2 t}$ which results in a Gompertz equation~\cite{ZwieteringJon1990}.

As the EGM only describes the overall population size, it is contained in the DLSP. By choosing $\alpha_i(t) = \phi(t)$, $\beta_i(t) = 0$ and $\bar{N}_{0,\init} = \bar{M}_\init^{\EGM}$, the overall population size $\bar{M}(t)$ predicted by the DLSP is equivalent to $\bar{M}^{\EGM}(t)$. This can be shown using the time derivative of $\bar{M}$,
\begin{equation}
	\frac{d\bar{M}}{dt} = \sum_{i \in \mathbb{N}_0} \frac{d\bar{N}_i}{dt} = \phi(t) \sum_{i \in \mathbb{N}_0} \bar{N}_i(t) = \phi(t) \bar{M}(t),
\end{equation}
which has the initial condition $\bar{M}_\init = \sum_{i \in \mathbb{N}_0} \bar{N}_{i,\init} = \bar{M}_\init^{\EGM}$.

\subsection{Relation between DSP and DLSP}
\label{ref: Relation between DLSP and DSP}
In contrast to the EGM, the DSP resolves the subpopulations, and the state variables $\bar{N}_i^{\DSP}(t)$ correspond to the number of cells which have divided $i$ times. To our knowledge this model has first been proposed in~\cite{RevySos2001} and its most common form is equal to~\eqref{eq: ODE part of ansatz}. Thus, the DSP is contained in the DLSP and is obtained by marginalization over the label concentration $x$. Actually, according to Theorem~\ref{theorem: decomposition}, a DSP model is solved to compute the solution of the DLSP. As for the PDE component of the DLSP an analytical expression can be derived (Corollary~\ref{corollary: decomposition with exact solution for PDE}), solving the DLSP model has basically the same complexity as solving the DSP. 

\subsection{Relation between LSP and DLSP}
\label{ref: Relation between DLSP and LSP}
For the comparison of model predictions and labeling experiments with CFSE or BrdU, the LSP model has been introduced~\cite{BanksSut2010,LuzyaninaRoo2009,LuzyaninaRoo2007}. The state variable of the LSP denote the label density $M^{\LSP}(t,x)$ in the population. In general, the evolution of $M^{\LSP}(t,x)$ is modeled by the PDE
\begin{equation}
\begin{aligned}
	&\frac{\partial M^{\LSP}(t,x)}{\partial t} + \frac{\partial (\nu(x) M^{\LSP}(t,x))}{\partial x} =\\
	&\hspace{2cm}- \left( \alpha(t,x) + \beta(t,x) \right) M^{\LSP}(t,x) + 2 \gamma \alpha(t,x) M^{\LSP}(t,\gamma x),
\label{eq: LSP model}
\end{aligned}
\end{equation}
with initial condition $M^{\LSP}(0,x) \equiv M^{\LSP}_{\init}(x)$~\cite{BanksSut2010}. As this model allows for label dependent division and death rates, $\alpha(t,x)$ and $\beta(t,x)$, it is in this respect more general than the DLSP.

However, it is not obvious why the cell division or death rates should depend on the label concentration. If the experiments are performed at low label concentrations far from the toxic regime, the population dynamics should be independent of the labeling~\cite{LyonsPar1994,MateraLup2004}. In particular, complex dependencies of $\alpha(t,x)$ and $\beta(t,x)$ on the label concentrations $x$, like those shown in~\cite{BanksSut2010}, are hard to argue. Additionally, a recent study supports that the introduced nonlinearities are correlated with the division number~\cite{BanksSut2010}.

Therefore, we just consider division and death rates which solely depend on time $t$, $\alpha(t)$ and $\beta(t)$. As proven in Appendix~\ref{app sec: Solution of DLSP solves LSP}, for this case, the solution $M(t,x)$ of the DLSP, with $\alpha_i(t) = \alpha(t)$ and $\beta_i(t) = \beta(t)$ and $N_{0,\init}(x) \equiv M^{\LSP}_{\init}(x)$, is equivalent to $M^{\LSP}(t,x)$. This shows that under these assumptions, the information provided by the LSP is a subset of the information available from the DLSP. This renders the DLSP more useful, as also subpopulation sizes are accessible.

Furthermore, for time dependent $\alpha(t)$ and $\beta(t)$, the solution of the DLSP can be approximated by a low-dimensional ODE system (Theorem~\ref{theorem: convergence} and~\ref{theorem: truncation error}). Hence, instead of computing $M^{\LSP}(t,x)$ using a PDE solver as done in all available publications, one may solve only a low-dimensional ODE system. Using the analytical results for the ODE system~\eqref{eq: ODE part of ansatz} even analytical solutions are available, e.g.,
\begin{equation}
	M^{\LSP}(t,x) = e^{-(\alpha + \beta)t} e^{kt}  \left(\sum_{i \in \mathbb{N}_0} \frac{(2 \alpha \gamma t)^i}{i!} M^{\LSP}_{0}(0, \gamma^i e^{kt} x) \right),
\label{eq: solution to LSP model}
\end{equation}
for constant rates $\alpha$ and $\beta$. Although this result for the LSP may be helpful to study various systems, we have not found it in the literature yet. The reason might be that a direct derivation of~\eqref{eq: solution to LSP model} is rather complex, whereas the study of the DLSP renders it straightforward.

Clearly, label dependent cell division and death rates or constant label loss rates were not considered here, in contrast to what was done in~\cite{BanksSut2010,LuzyaninaRoo2009,LuzyaninaRoo2007}. This was avoided as the decomposition of the solution shown in Section~\ref{subsec: Solution of the DLSP via decomposition} becomes impossible and solving the DLSP model gets computationally challenging. Nevertheless, the loss of these degrees of freedom is compensated by allowing for biologically more plausible division dependent cell parameters in the DLSP.

\subsection{DLSP as a unifying modeling framework}
The implications of the findings in Section~\ref{ref: Relation between DLSP and EGM}-\ref{ref: Relation between DLSP and LSP} are that the three most prevalent classes of population models are captured by the DLSP. Furthermore, it is more general, as label distributions and division dependent parameters may be considered, which are both important and well motivated from a biological point of view. Figure~\ref{fig: relation between model} illustrates the relations and shows how the EGM, the DSP, and the LSP may be constructed from DLSP via marginalization.

In contrast to the generality, the simulation effort increases only marginally when studying the DLSP instead of the DSP or the LSP. This is due to the decomposition into a system of ODEs (which is equivalent to the DSP), and a single set of PDEs. The set of PDEs can be solved analytically, and in several cases even analytical solutions for the ODE exist, facilitating an analytical solution of the overall system. Such analytical solutions can then be used to determine previously unknown analytical solutions for DSP and LSP, e.g., like~\eqref{eq: solution to LSP model}.

\begin{figure}[t!]
\centering
	\includegraphics[scale=1]{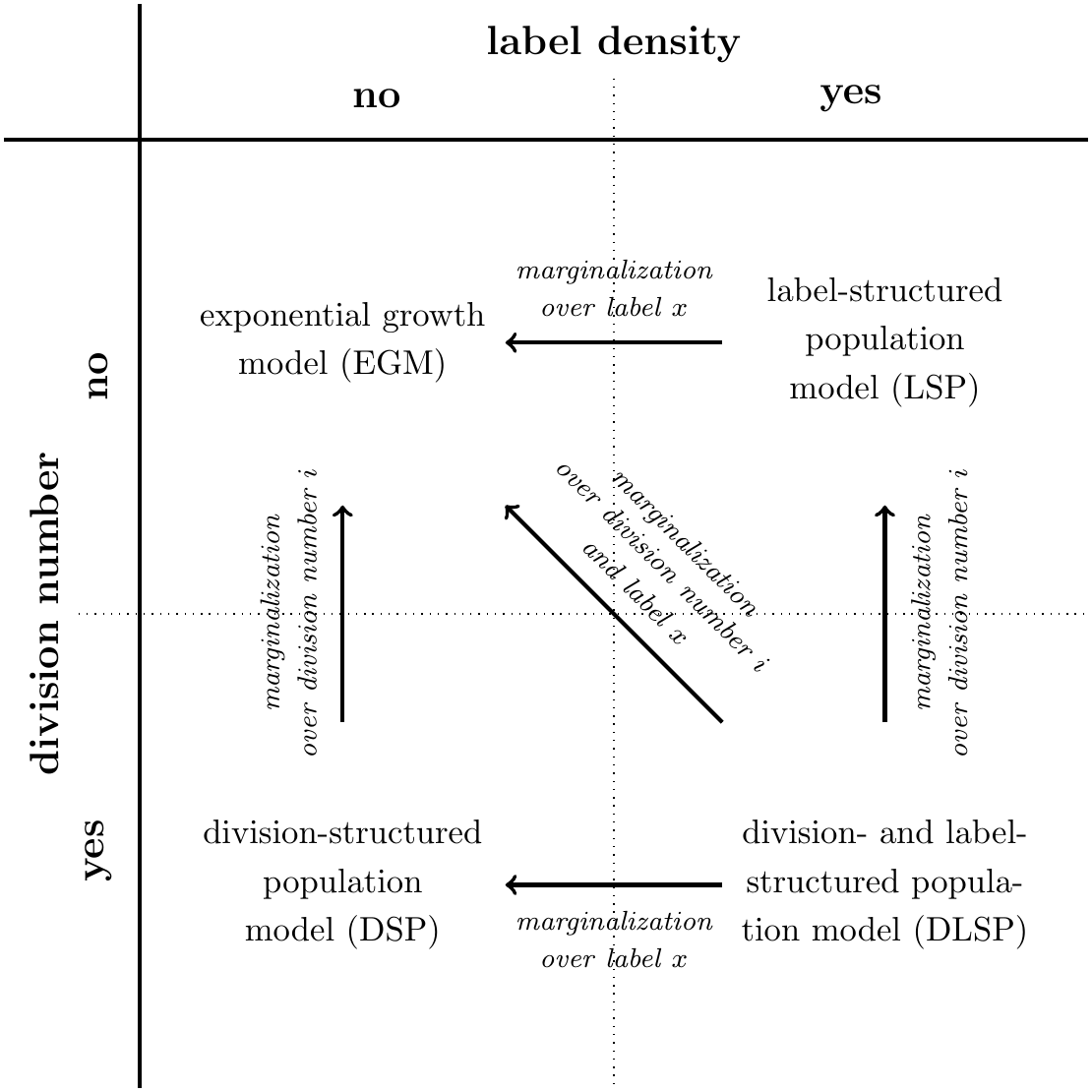}
\caption{Illustration of the relation between the exponential growth model (EGM), the division-structured population model (DSP), the label-structured population model (LSP), and the division- and label structured population model (DLSP). The models are distinguished using two properties, the availability of division numbers (vertical axis) and of information about the label distribution (horizontal axis). Arrows indicate whether and arrow labels describe how a model can be obtained from another model. It is apparent that the DLSP model is the most general model, as all remaining models can be constructed from it via marginalization.}
\label{fig: relation between model}
\end{figure}

\begin{remark}
Obviously, there exist extensions of the LSP and the DSP which are not captured by the current version of the DLSP. Examples are the aforementioned label concentration dependent division and death rates for the LSP~\cite{LuzyaninaRoo2007,LuzyaninaRoo2009,BanksSut2010} as well as DSP models with recruitment delay~\cite{LeonFar2004,DeBoerGan2006}. While the DLSP model can easily be extended to take such effects into account, the numerical analysis will get more challenging.
\end{remark}

\section{Computation of measured label distribution}
\label{sec: Computation of measured label distribution}

In the last section, we related the division- and label-structured population model to existing models. In this section, the prediction of the DLSP models will be related to data collected in proliferation assays.

\subsection{Autofluorescence and measured overall label distribution}
As outlined in the introduction, to obtain quantitative information about the proliferation dynamics, the fluorescent levels of individual cells are assessed using flow cytometry \cite{HawkinsHom2007}. The fluorescence level of an individual cell, $y \in \mathbb{R}_+$, summarizes the label induced fluorescence, $x$, and the autofluorescence, $x_a$,
\begin{equation}
y = x + x_a.
\end{equation}
The background, which might be interpreted as measurement noise, avoids a precise reconstruction of the label concentration. Furthermore, it limits the number of cell divisions which can be observed. While the label induced fluorescence, $x$, halves at cell division, this is not true for the autofluorescence. As the initial label concentration cannot be arbitrary high to avoid interference with the cell's functionality and toxicity, even for highly optimized labeling strategies only six to eight division can be observed before the observed fluorescence becomes indistinguishable from the background fluorescence~\cite{HawkinsHom2007}.

To address these problem a modified label-structured population model is introduced in \cite{BanksBoc2011} for the case of constant background fluorescence, $x_a$. This modified label-structured population model directly describes the evolution of $y$, accounting for the facts that (1) only $x$ is divided among daughter cells and (2) only $x$ is degraded over time. Unfortunately, this complicates the numerical treatment -- for this model no analytical expression for the label evolution is known -- and does not allow for the a separate analysis of the contributions. Furthermore, experiments showed that the background fluorescence varies among cells~\cite{HawkinsHom2007}. The autofluorescence, also called background fluorescence, is a stochastic variable $x_a \sim p(x_a)$, which is independent of the level of label concentration. The distribution of $x_a$, $p(x_a): \mathbb{R}_+ \rightarrow \mathbb{R}_+$, with $\int_{\mathbb{R}_+} p(x_a) dx_a = 1$, can be assessed using control experiments~\cite{HawkinsHom2007,BanksThom2012,Thompson2012}.

In this section, we propose an approach to predict the measured distribution of fluorescence, while explicitly distinguishing label dynamics and measurement process. The label dynamics are described by the DLSP model and the measured distribution of fluorescence is simply the convolution of the label induced fluorescence, $M(t,x)$, and the autofluorescence distribution, $p(x_a)$,
\begin{equation}
M^y(t,y) = \int_{\mathbb{R}_+} M(t,x) p(y-x) dx.
\label{eq: computation of measured fluorescence distribution}
\end{equation}
Hence, the measured fluorescence distribution $M^y(t,y)$, which is a number density function, can be obtained by simulating~\eqref{eq: solution of overall label density} and computing the convolution integral~\eqref{eq: computation of measured fluorescence distribution}. This is comparable to results described in \cite{BanksThom2012,Thompson2012}, where $x_a$ is partially contained in the model. However, the decomposition of the computation of $M^y(t,y)$ in dynamics and measurement is far more intuitive than a combined model as in~\cite{BanksBoc2011,BanksThom2012,Thompson2012} which combines the effects.

\subsection{Efficient approximation of measured overall label distribution}

It has been shown that the overall label distribution, $M(t,x)$, can be computed efficiently using the simulation of a low-dimensional ODE model and the analytical solution of a simple PDE. Unfortunately, this efficiency is corrupted by the need for solving the convolution integral~\eqref{eq: computation of measured fluorescence distribution}. A repeated evaluation, as required for parameter estimation (see, e.g., \cite{BanksBoc2011}), results in a large computational burden.

To reduce the computational complexity, we propose an approximation for $\hat{M}^y(t,y)$ of $M^y(t,y)$ which can be computed without integration. To allow for this approximation, we assume that the initial condition is a weighted sum of log-normal distributions,
\begin{equation}
N_{0,\init}(x) = \bar{N}_{0,\init}\sum_{j=1}^J f^j \log\mathcal{N}(x|\mu^j_{\init},(\sigma^j_\init)^2)
\label{eq: log-normal initial condition}
\end{equation}
with fraction parameters $f^j \in [0,1]$, with $\sum_{j=1}^J f^j = 1$, parameters $\mu^j_{\init},\sigma^j_\init \in \mathbb{R}_+$, and
\begin{equation}
\log\mathcal{N}(x|\mu,\sigma^2) = 
\left\{
\begin{array}{ll}
\frac{1}{\sqrt{2 \pi} \sigma x} e^{-\frac{1}{2}\left(\frac{\log(x)-\mu}{\sigma}\right)^2} &, x > 0 \\
0 &, x \leq 0.
\end{array}
\right.
\end{equation}
The faction parameters, $f^j$, determine which fraction of cells belongs to which log-normal distribution. The number of different log-normal distributions is denoted by $J \in \mathbb{N}$. In addition, we restrict the measurement noise to be log-normally distributed, $p(x_a) = \log\mathcal{N}(x_a|\mu_a,\sigma_a^2)$. These two assumptions are not restrictive, as any smooth distribution can be approximated arbitrarily well by a sum of log-normal distributions and as autofluorescence levels are known to be approximately log-normally distributed (see, e.g., \cite{HawkinsHom2007}).

Given~\eqref{eq: log-normal initial condition}, it can be shown that the label distribution in the individual subpopulation is
\begin{equation}
N_i(t,x) = \bar{N}_i(t) \sum_{j=1}^J f^j \log\mathcal{N}(x|\mu^j_{i}(t),(\sigma^j_\init)^2)
\end{equation}
with $\mu^j_{i}(t) = - i \log(\gamma) - \int_0^t k(\tau) d\tau + \mu^j_{\init}$  (for proof see Appendix~\ref{app sec: Proof that the PDE conserves log-normal distributions}). This follows directly from the analytical solution of $n_i(t,x)$. Thus, log-normal distributions are conserved under the considered class of partial differential equations, and log-normal initial conditions result in log-normal label distribution for $t > 0$. This implies that also the label induced fluorescence distribution is a sum of log-normal distributions,
\begin{align}
M(t,x)
= \sum_{i \in \mathbb{N}_0} N_i(t,x)
= \sum_{i \in \mathbb{N}_0} \bar{N}_i(t) \sum_{j=1}^J f^j \log\mathcal{N}(x|\mu^j_{i}(t),(\sigma^j_\init)^2)
\end{align}
By inserting this in the convolution integral~\eqref{eq: computation of measured fluorescence distribution}, we obtain by linearity of integration
\begin{align}
M^y(t,y)
&= \sum_{i \in \mathbb{N}_0} N_i^y(t,x) dx = \sum_{i \in \mathbb{N}_0} \int_{0}^\infty N_i(t,x) \log\mathcal{N}(y-x|\mu_a,\sigma_a^2) dx \\
&= \sum_{i \in \mathbb{N}_0} \bar{N}_i(t) \sum_{j=1}^J f^j \int_{0}^\infty \log\mathcal{N}(x|\mu^j_{i}(t),(\sigma^j_\init)^2) \log\mathcal{N}(y-x|\mu_a,\sigma_a^2) dx.
\end{align}
The individual summands of $M^y(t,y)$, $N_i^y(t,x) = \int_{0}^\infty N_i(t,x) \log\mathcal{N}(y-x|\mu_a,\sigma_a^2) dx$, are the measured fluorescence distributions in the subpopulations defined by a common division number. Therein, the summands of $N_i^y(t,x)$,
\begin{equation}
n_i^{y,j}(t,x) = \int_{0}^\infty \log\mathcal{N}(x|\mu^j_{i}(t),(\sigma^j_\init)^2) \log\mathcal{N}(y-x|\mu_a,\sigma_a^2) dx,
\end{equation}
describe the contribution of the $j$-th log-normal distribution in the initial condition to $n_i^y(t,x)$. This can be traced back as the superposition principle holds. Apparently, the efficient assessment of $M^y(t,y)$ is possible, using an efficient computational scheme for computing $n_i^{y,j}(t,x)$.

The probability density $n_i^{y,j}(t,x)$ is the probability density of the sum of two log-normally distributed random variables. Although, this density is of interest in many research fields (see \cite{Fenton1960,Beaulieu2004} and references therein), no analytical formula for computing $n_i^{y,j}(t,x)$ is known. Still, several approximations are available. One of the most commonly used approximation has been proposed by Fenton \cite{Fenton1960}. Fenton employs the fact that  although the distribution of the sum of two log-normally distributed random variables is not log-normal, it can still be closely approximated by a log-normal distribution. In \cite{Fenton1960}, this approximating log-normal distribution is chosen to have the same first two central moments, mean $\mathrm{E}^{j,y}_i(t)$ and variance $\mathrm{Var}^{j,y}_i(t)$, as the actual distribution of the sum.

The time-dependent central moments of $n_i^{y,j}(t,x)$ are the sums
\begin{align}
\mathrm{E}^{j,y}_i(t) &= \mathrm{E}^{j}_i(t) + \mathrm{E}_a, \\
\mathrm{Var}^{j,y}_i(t) &= \mathrm{Var}^{j}_i(t) + \mathrm{Var}_a,
\end{align}
of the time-dependent central moments of the label distribution of the $i$-th subpopulation, $\mathrm{E}^{j}_i(t)$ and $\mathrm{Var}^{j}_i(t)$, and the static autofluorescence, $\mathrm{E}_a$ and $\mathrm{Var}_a$, as it is known from basic statistics \cite{GrinsteadSne1997}. These central moments are 
\begin{align}
\mathrm{E}^{j}_i(t) &= e^{\mu^j_i(t)} e^{\frac{(\sigma^j_\init)^2}{2}},\\
\mathrm{Var}^{j}_i(t) &= e^{2 \mu^j_i(t) + (\sigma^j_\init)^2} \left(e^{(\sigma^j_\init)^2} -1 \right).
\end{align}
for the label distribution and 
\begin{align}
\mathrm{E}_a(t) &= e^{\mu_a} e^{\frac{(\sigma_a)^2}{2}}, \\
\mathrm{Var}_a(t) &= e^{2 \mu_a + (\sigma_a)^2} \left(e^{(\sigma_a)^2} -1 \right).
\end{align}
for the measurement noise. Following~\cite{Fenton1960}, the log-normal distribution exhibiting the same overall mean and variance has parameters
\begin{align}
\hat{\mu}^{j,y}_i(t) &= \log(\mathrm{E}^{j,y}_i(t)) - \frac{1}{2}\log\left(\frac{\mathrm{Var}^{j,y}_i(t)}{\mathrm{E}^{j,y}_i(t)} + 1\right),\\
\hat{\sigma}^{j,y}_i(t) &= \sqrt{\log\left(\frac{\mathrm{Var}^{j,y}_i(t)}{\mathrm{E}^{j,y}_i(t)} + 1\right)},
\end{align}
yielding the approximation
\begin{equation}
	\hat{n}_i^{y,j}(t,x) = \log\mathcal{N}(x|\hat{\mu}^{j,y}_i(t),(\hat{\sigma}^{j,y}_i(t))^2)
\end{equation}
of $n_i^{y,j}(t,x)$. Own studies revealed (not shown), that this approximation is for narrow distributions almost indistinguishable from the true distribution. In particular, if one of the distribution becomes narrow, the approximation can be made arbitrary good. This is helpful, as the precise parameterization of the initial condition might be a degree of freedom, which can be used to regulate the approximation quality.

Given the approximation of $n_i^{y,j}(t,x)$, the approximation 
\begin{equation}
\hat{M}^y(t,y) = \sum_{i \in \mathbb{N}_0} \bar{N}_i(t) \sum_{j=1}^J f^j \log\mathcal{N}(x|\hat{\mu}^{j,y}_i(t),(\hat{\sigma}^{j,y}_i(t))^2).
\label{eq: approximation of the convolution integral}
\end{equation}
of the measured fluorescence distribution can be computed. This approximation is the sum of log-normal distributions those parameters can be computed analytically. Therefore, it merely requires the evaluation of the log-normal distribution at different points, which can be made fairly efficient using lookup tables. The approximation~\eqref{eq: approximation of the convolution integral} can be determined orders of magnitude faster than the actual convolution integral~\eqref{eq: computation of measured fluorescence distribution} used, e.g., in \cite{BanksThom2012,Thompson2012}. Apparently, this approximation can also be combined with the truncation introduced in the last section.

Similar to the actual value, the approximation $\hat{M}^y(t,y)$ might be employed to perform parameter estimation. There, $\hat{M}^y(t,y)$ is compared directly \cite{LuzyaninaRoo2007} or indirectly \cite{LuzyaninaRoo2009,BanksSut2010,BanksBoc2011} with the measured flow cytometry data. This enables the inference of the model parameters, for instance, proliferation and death rate. 

\begin{remark}
Parameter estimation for the DLSP model is beyond the scope of this work. We focus on the development of modeling and simulation tools for structured cell population, which might in a second step be employed to infer parameters.
\end{remark}

\section{Example: Population with division number dependent parameters}
\label{sec: Example}

To demonstrate the properties of the DLSP model, an illustrative simulation study is performed. Therefore, a hypothetical cell population system with division number dependent proliferation rates $\alpha_i$ is considered. The existence of division number dependent proliferation dynamics is known for many cell systems~\cite{DeBoerGan2006,Hayflick1965,KassemAnk1997}, whereas the magnitude of the effect varies between them. This example shall illustrate the power of the DLSP and the proposed numerical procedure and therefore does not focus on a particular biological system.

The hypothetical cell population is assumed to have an initial proliferation rate of $\tilde{\alpha} = 0.02$ [1/hour], corresponding to an initial doubling time of 35 hours. This initial proliferation rate changes upon cell division. It is assumed that the proliferation rate decreases exponentially, $\forall i: \alpha_i = \tilde{\alpha} e^{-\Delta_\alpha i}$ [1/hours], with \mbox{$\Delta_\alpha = 0.23$ [-]}. This rate law is based on the findings in~\cite{KassemAnk1997} and results in a reduction of the proliferation rate by a factor of 2 when proceeding through 3 generations, thus $\alpha_{i+3} = \frac{\alpha_i}{2}$. The cell death rate is set to a constant value, $\forall i: \beta_i = \beta = 0.001$ [1/hours]. Concerning the labeling, a log-normal initial label density $N_{0,\init}(x)$ is assumed, as observed in many studies, e.g.,~\cite{LuzyaninaRoo2009,LuzyaninaRoo2007,BanksSut2010}. The label dilution factor and the degradation rate are set to $\gamma=2$ [-] and $k = 0.003$ [UI/hour], respectively, in which UI denotes the unit of label intensity. The autofluorescence is assumed to be log-normally distributed with $\mu_a = 2.5$ and $\sigma_a=0.3$. All parameter values are comparable to those available in the literature~\cite{LuzyaninaRoo2009,LuzyaninaRoo2007,BanksSut2010}.

\begin{figure}[t!]
\centering
	\includegraphics[scale=0.75]{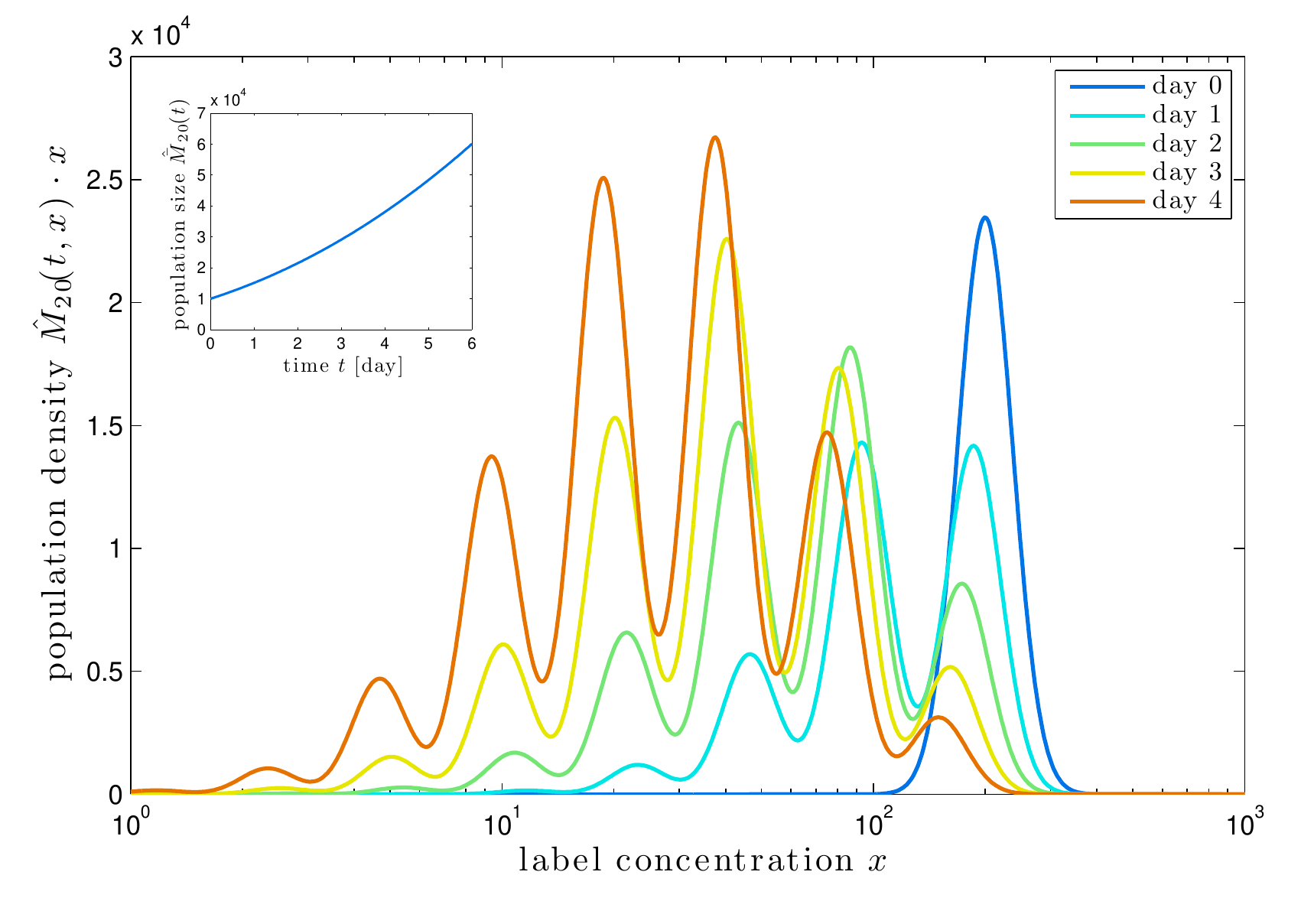}
\caption{Label density in cell populations at different points in time, computed from the first $S=20$ subpopulations. In order to ensure comparability with common histograms plots whose bins are logarithmically distributed, the label density is multiplied with the label concentration $x$.}
\label{fig-ex: overall population}
\end{figure}

The resulting cell population model is simulated for $t \in [0,6]$ days. The label density $\hat{M}_{20}(t,x)$ and the size $\hat{\bar{M}}_{20}(t)$ of the overall population are depicted in Figure~\ref{fig-ex: overall population}. Both quantities are computed using a truncation index of $S = 20$, thus merely the first 20 subpopulations are taken into account. This already ensures a small truncation error.

The actual truncation error and the bound for the truncation error are now studied in more detail. As no analytical solution is available, we compare all results to $\hat{M}_{100}(t,x)$. For this case, the analytical expression~\eqref{eq: upper bound for truncation error} for the error bound yields $E_S(t) \leq 10^{-20}$ over the whole time interval $t \in [0,6]$ days. Therefore, $\hat{M}_{100}(t,x)$ is considered as the exact solution. Given $\hat{M}_{100}(t,x)$ the truncation error is evaluated. From the results depicted in Figure~\ref{fig-ex: truncation error - exact} it is apparent that over the considered time interval, already $S=11$ provides an error smaller than $10^{-5}$. This illustrates that a small number of subpopulations is sufficient to obtain a good approximation of the population density. This result is also supported by the derived truncation error bound $E_S(T)$ (Figure~\ref{fig-ex: truncation error - bound}), while the expected truncation error is overestimated. This is visible for instance for $T = 6$ and $S = 11$, where the truncation error bound is several orders of magnitude higher than the actual truncation error.

\begin{figure}[t!]
\begin{center}
\subfigure[Truncation error, $\frac{||M(T,x) - \bar{M}_S(T,x)||_1}{||M(0,x)||_1}$.]{
	\includegraphics[scale=0.75]{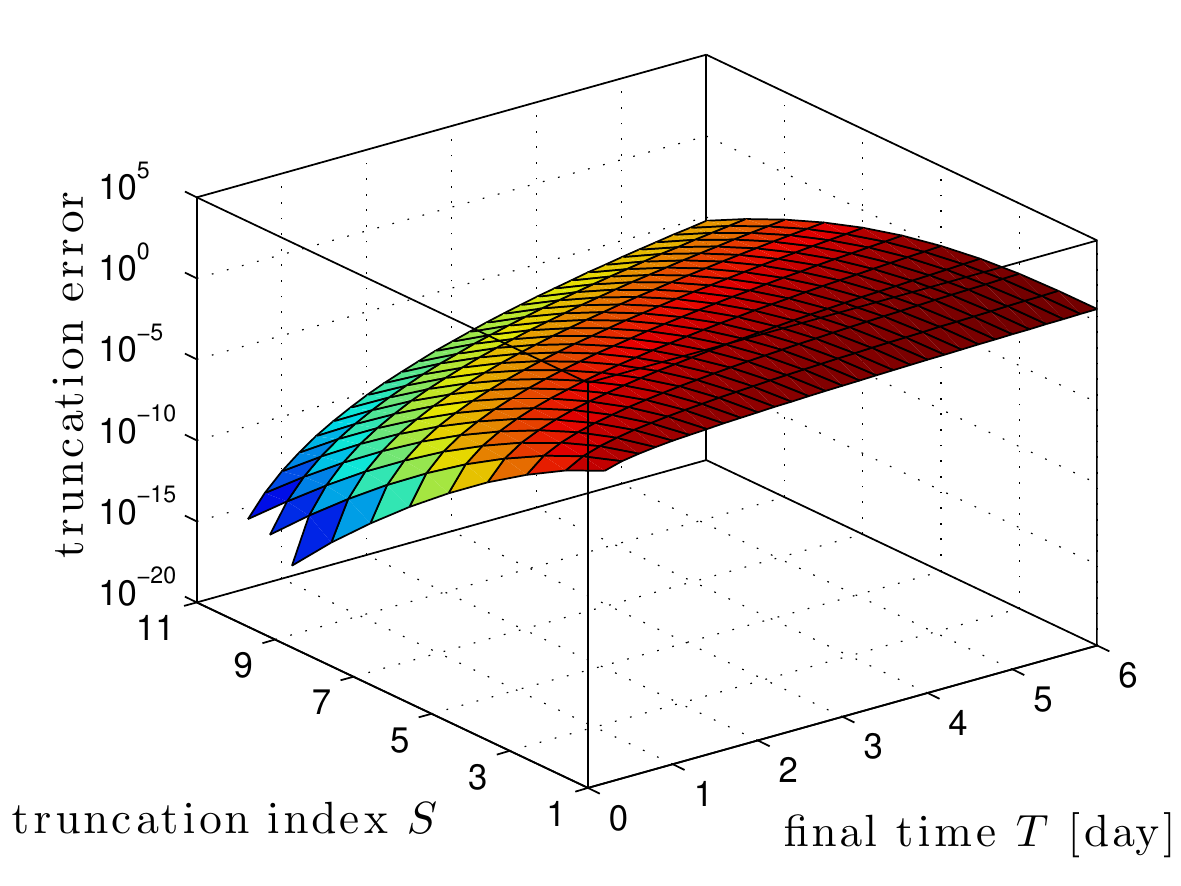}
\label{fig-ex: truncation error - exact}}
\\
\subfigure[Bound for the truncation error, $E_S(T)$.]{
	\includegraphics[scale=0.75]{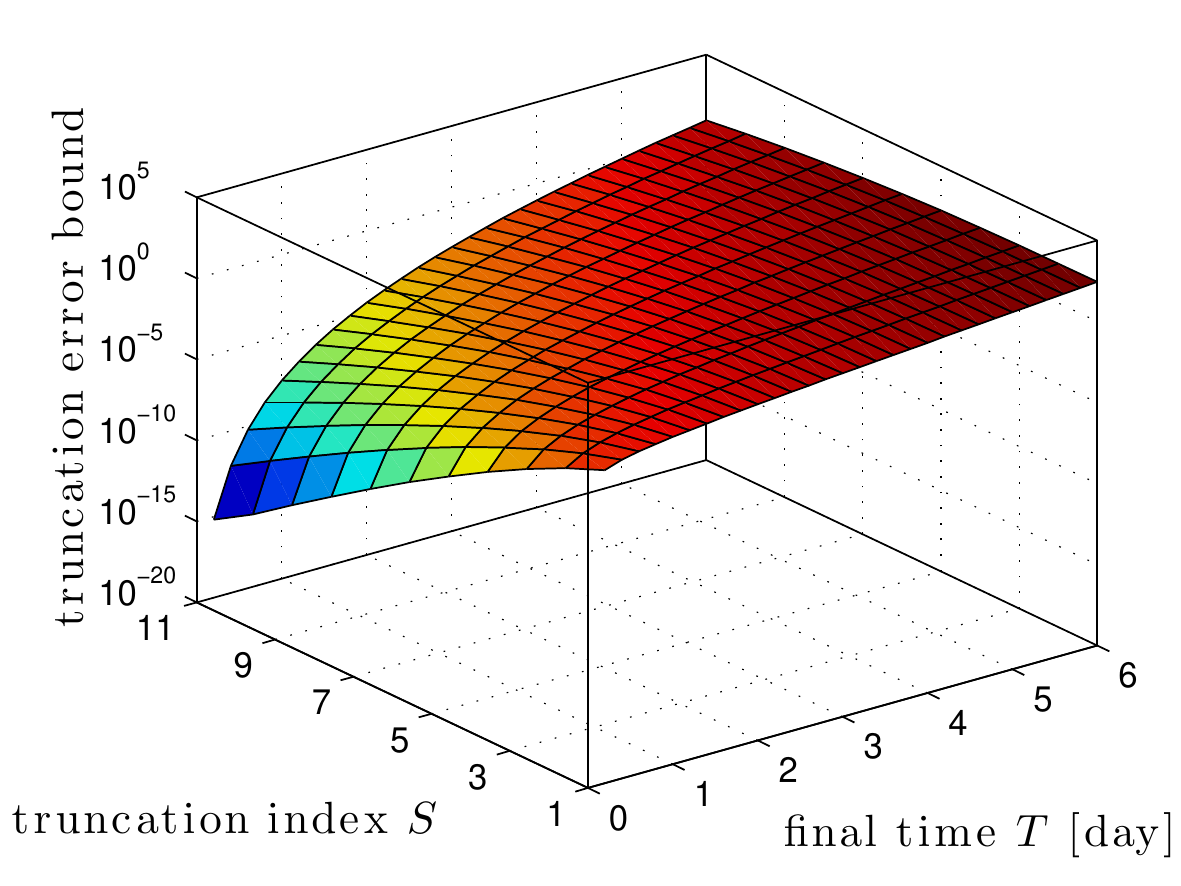}
\label{fig-ex: truncation error - bound}}
\caption{Truncation error~\subref{fig-ex: truncation error - exact} and truncation error bound~\subref{fig-ex: truncation error - bound} as function of the truncation index $S$ and the final time $T$.}
\label{fig-ex: truncation error 1}
\end{center}
\end{figure}

To assess the truncation error bound more precisely, we compute the minimal truncation index $S$ required to ensure a predefined error bound $\epsilon$. This analysis is performed using the exact truncation error (blue) and the truncation error bound (orange). The results are depicted in Figure~\ref{fig-ex: truncation error 2}, where Figure~\ref{fig-ex: truncation error - S time dependent} shows the time dependency and Figure~\ref{fig-ex: truncation error - S error dependent} shows the error level dependency of the minimal truncation index $S$. As verified previously, the computation from the exact truncation error yields lower truncation indices. The maximal observed difference for this system is a factor of 2. The difference increases over time and interestingly, for this system, the truncation index $S$ as a function of $T$ approximates a line with a slope of 2.5. While the slope is problem dependent, this effect has been observed for all considered systems. It probably originates from the structure of the truncation error bound~\eqref{eq: required truncation index}. Besides the time dependency, the index $S$ depends also on $\epsilon$. When $\epsilon$ is decreased by a factor of 10, the index $S$ has to increase by 2. This is a quite reasonable scaling and allows for very good approximations. For the system at hand, the analysis of the exact truncation error shows that $S=10$ ensures an error of 0.01~\% of the original population size. Employing the truncation error bound we compute $\bar{M}(t)$. Thus, the truncation error is overestimated by a factor of 2 but this number is computed without simulation and available even if the exact solution is not known. Given the upper bound of the truncation error, we can verify a priori that a very good approximation ($E_S(T) < 10^{-3}$) of the solution of the coupled system of PDEs~\eqref{eq: coupled PDE model for cell population} can be calculated by solving a system of 20 ODEs. This reduces the computational effort drastically.

\begin{figure}[!p]
\begin{center}
\subfigure[Truncation index $S$ required to ensure that $\frac{||M(T,x) - \bar{M}_S(T,x)||_1}{||M(0,x)||_1} \leq \epsilon$ (exact) and $E_S(T) \leq \epsilon$ (bound) for two levels of $\epsilon$.]{
	\includegraphics[scale=0.75]{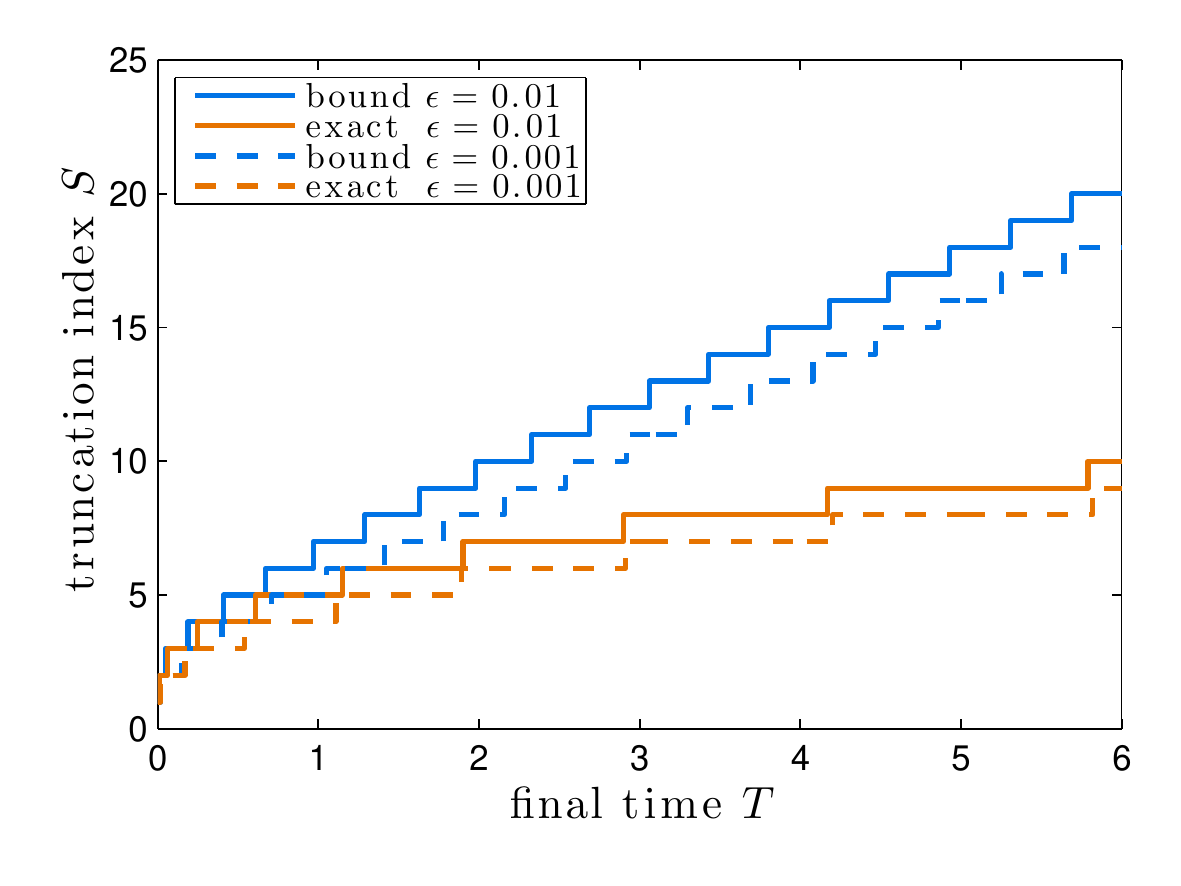}
\label{fig-ex: truncation error - S error dependent}}
\\
\subfigure[Truncation index $S$ required to ensure that $\frac{||M(T,x) - \bar{M}_S(T,x)||_1}{||M(0,x)||_1} \leq \epsilon$ (exact) and $E_S(T) \leq \epsilon$ (bound) for two points in time.]{
	\includegraphics[scale=0.75]{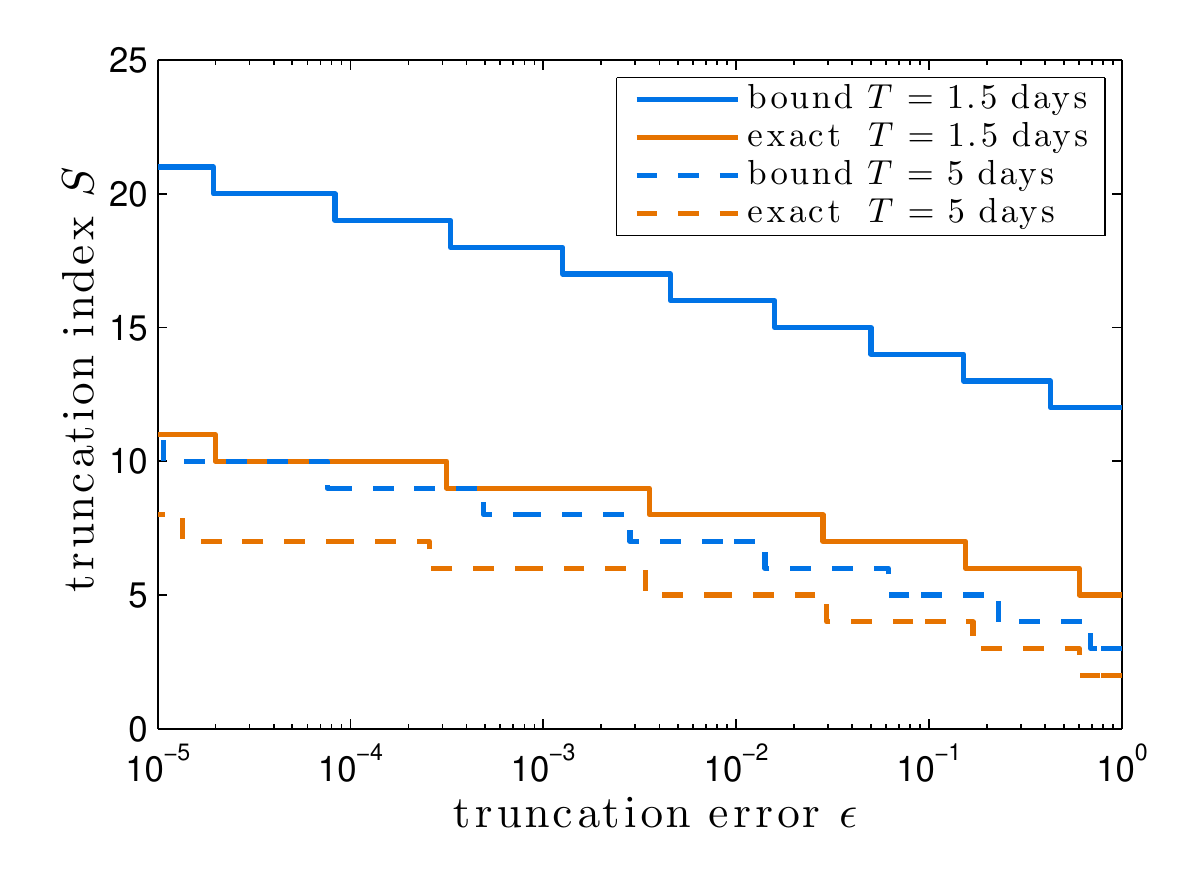}
\label{fig-ex: truncation error - S time dependent}}
\caption{Truncation index $S$ required to ensure a maximal error $\epsilon$ at a given time~$T$.}
\label{fig-ex: truncation error 2}
\end{center}
\end{figure}

Aside from the computational speed-up, the DLSP provides information about the overall label density and the size of the subpopulations. The former allows for the comparison of model prediction to labeling experiments, while the latter allows for the assessment of population properties like the mean division number (Figure~\ref{fig-ex: subpopulations}). These quantities are of interest in many studies, in which a precise understanding of the proliferation dynamics is of crucial importance.

\begin{figure}[!p]
\begin{center}
\subfigure[Size of subpopulations, $\bar{N}_i(t)$.]{
	\hspace*{2.2cm}\includegraphics[scale=0.75]{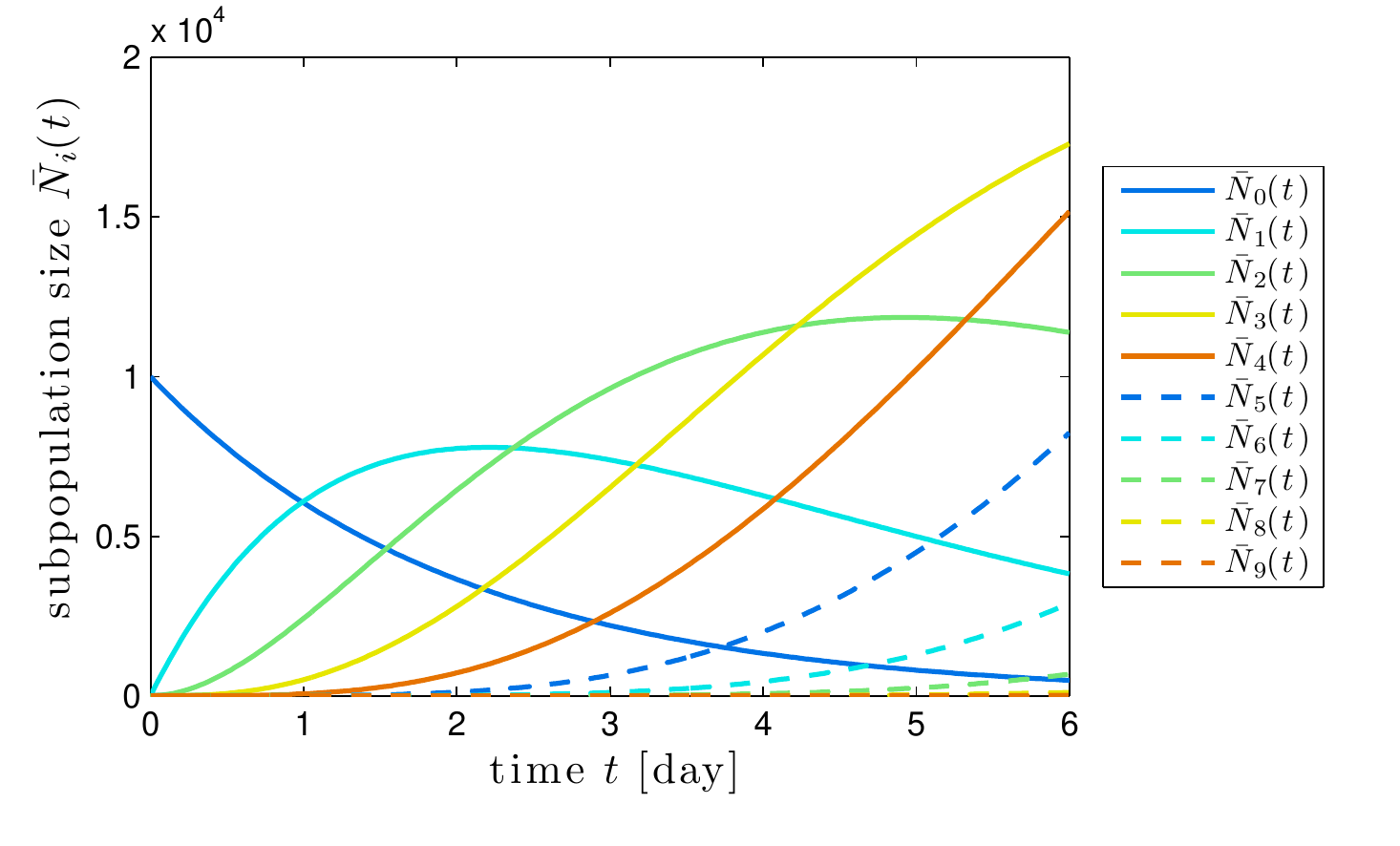}
\label{fig-ex: subpopulations - size}}
\subfigure[Mean number of divisions, $D(t) = \sum_{i\in\mathbb{N}_0} i \frac{\bar{N}_i(t)}{\bar{M}(t)}$.]{
	\includegraphics[scale=0.75]{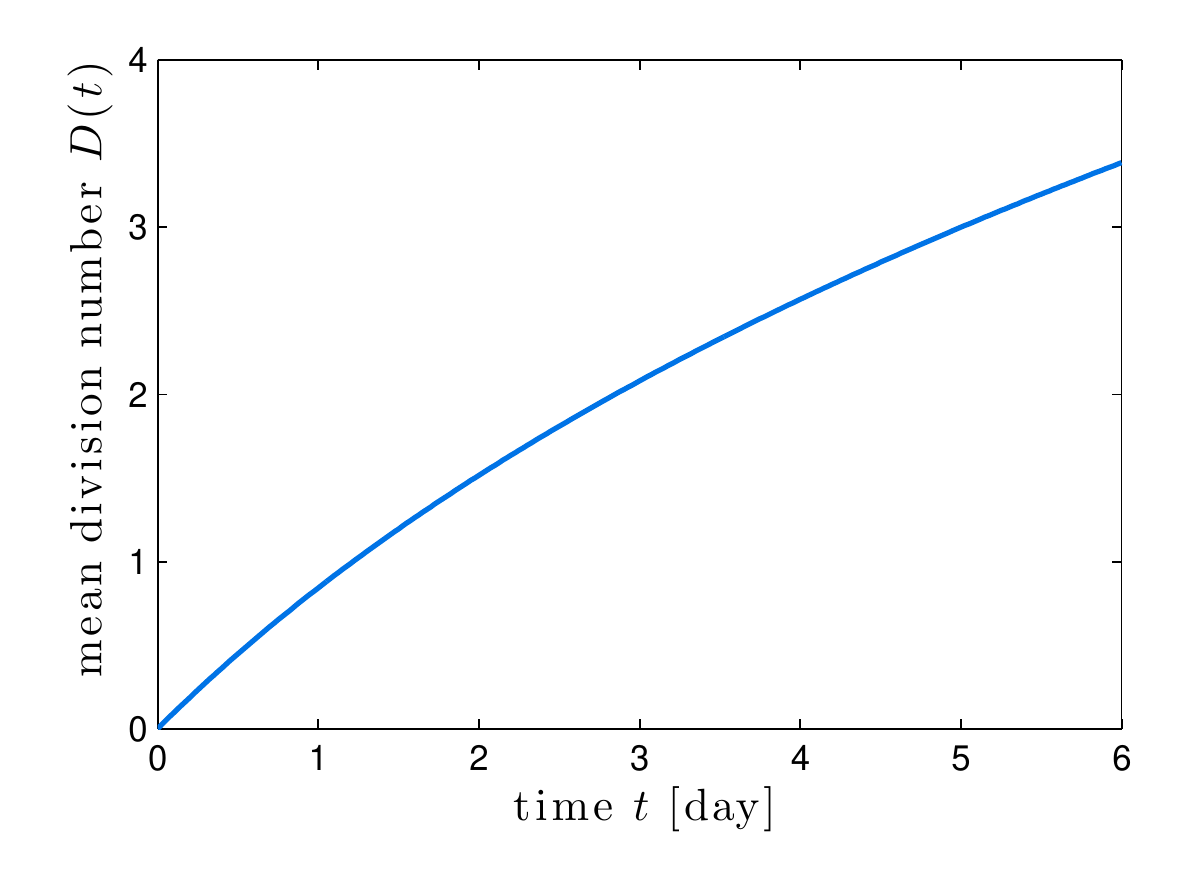}
\label{fig-ex: subpopulations - division number}}
\caption{Size of the subpopulations~\subref{fig-ex: subpopulations - size} and mean division number~\subref{fig-ex: subpopulations - division number} computed using the division- and label-structured population model.}
\label{fig-ex: subpopulations}
\end{center}
\end{figure}

Beyond the analysis of model properties, also an comparison of model prediction and measurement data is of interest. Therefore, the measured fluorescence distribution is required, which can be computed using~\eqref{eq: computation of measured fluorescence distribution} or approximated using~\eqref{eq: approximation of the convolution integral}. For the problem at hand, the true and the approximated solution are indistinguishable, while the approximated solution can be computed orders of magnitudes faster. The distribution of the measured fluorescence is depicted in Figure~\ref{fig-ex: measured overall population}. Similar to \cite{BanksBoc2011}, this simulation results shows that after a certain number of cell divisions, cells with different division numbers cannot be told apart any more. This is mainly caused by the halving of label concentration at each cell division, but also by the label degradation, resulting in an increased importance of the cellular autofluorescence.

\begin{figure}[t!]
\centering
	\includegraphics[scale=0.75]{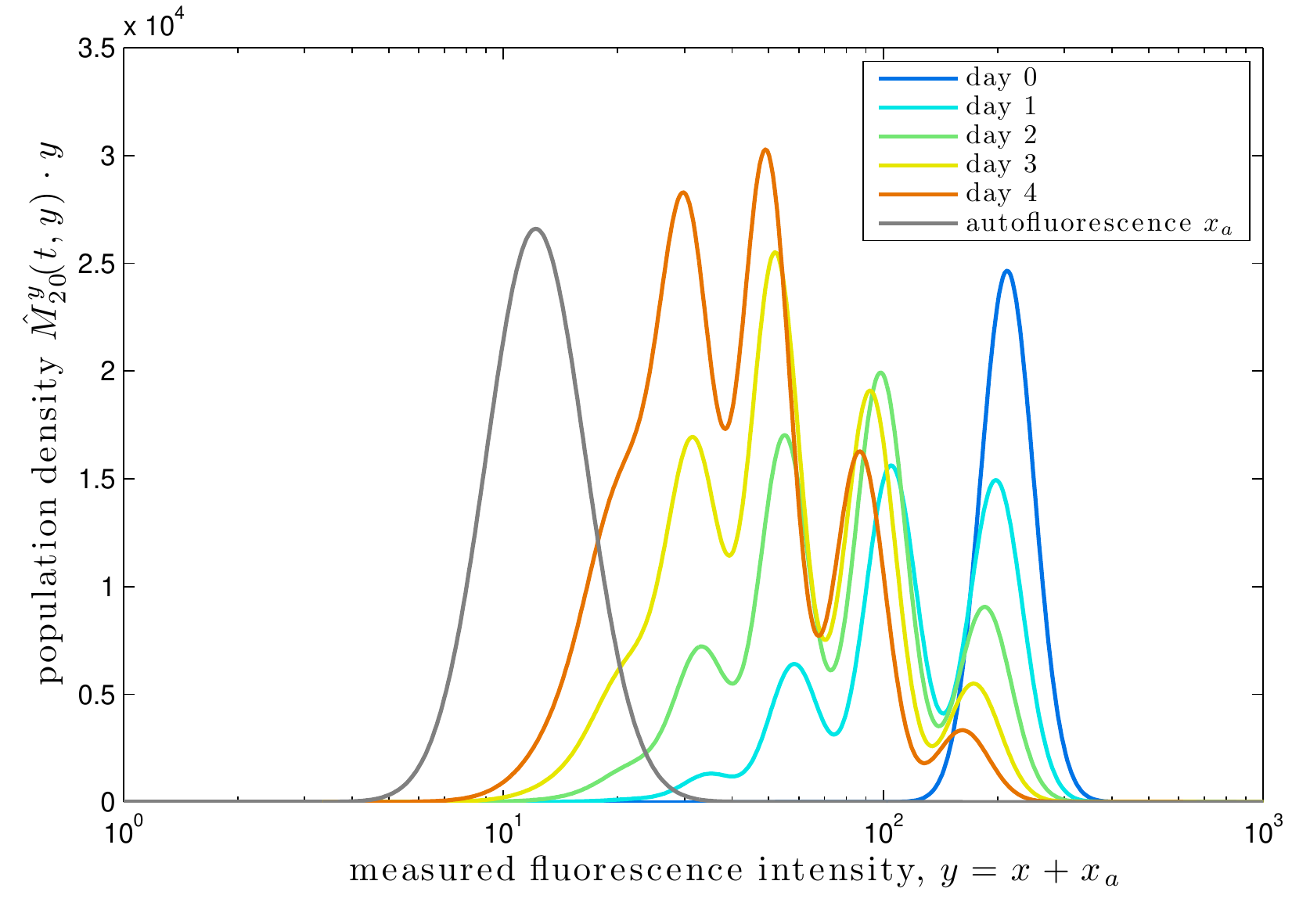}
\caption{Overall fluorescence intensity, $y = x + x_a$, in cell populations at different points in time, computed from the first $S=20$ subpopulations. In order to ensure comparability with common histograms plots whose bins are logarithmically distributed, the label density is multiplied with the label concentration $y$.}
\label{fig-ex: measured overall population}
\end{figure}

\section{Conclusion}
\label{sec: conclusion}

In this work, we have proposed a division- and label-structured population model which provides a unifying framework to study proliferating cell populations undergoing symmetric cell division. This model is based upon own work in \cite{SchittlerHas2011a} and considers both, continuous label dynamics and discrete division number dependent effects, such as cell aging. The resulting model is a system of coupled PDEs, which, under biologically plausible assumptions, can be split up into a system of ODEs and a set of decoupled PDEs. Each PDE describes the label distribution within one particular subpopulation and the ODE model describes the number of cells per subpopulation.

We have shown that the model is a generalization of existing division-structured population  models \cite{DeBoerGan2006,LuzyaninaMru2007} and label-structured population models \cite{LuzyaninaRoo2009,LuzyaninaRoo2007,BanksSut2010, BanksBoc2011}. Both model classes can be derived from the proposed model via marginalization. In contrast to these two existing types of models, the proposed model allows to incorporate division number dependent parameters as well as label distributions. The former one is important, as division number dependent parameters are found in many different cell systems and often are the subject of interest, while the latter one allows the direct comparison of model predictions and data. This supersedes complex and error-prone data analysis via deconvolution or peak detection~\cite{MateraLup2004,HawkinsHom2007,LuzyaninaMru2007}.

Clearly, though the model provides generalization and unification of several classes of population models, there remain models which are not covered. Examples are age-structured population models \cite{Marciniak-CzochraSti2009,StiehlMar2011,vonFoerster1959,Trucco1965,Oldfield1966,SinkoStr1967}, size-structured population models \cite{SinkoStr1967,DoumicMai2010}, and general population balance models \cite{Tsuchiya1966, FredricksonRam1967}. Furthermore, the size- and scar-structured population model for the asymmetrically dividing budding yeast has to be mentioned \cite{Gyllenberg1986}. There is quite a theory of population model construction introduced in \cite{DiekmannGyl1998}.

For the majority of these population models no analytical solutions are available. To study the dynamic properties of the models quantitatively, finite differences, finite volume, or finite elements discretization schemes are applied and the resulting ODE system is solved numerically (see, e.g., \cite{LuzyaninaRoo2007,BanksSut2010}). This need for numerical PDE solvers, which usually limits the state dimension to three to to the curse of dimensionality, is the main drawback of most populations models. It renders the analysis complex and partially accounts for the observed focus on steady state analysis \cite{Marciniak-CzochraSti2009,StiehlMar2011,DiekmannGyl2003,DiekmannGyl2010}, while dynamical aspects are mostly disregarded. Furthermore, an in-depth analysis of the model and its parameter has merely been performed for one-dimensional systems.

Besides its generality, the DLSP model can also be simulated efficiently. We have proven that the solution can be approximated by a low-dimensional ODE system, employing truncation. For the truncation error we have derived an a priori bound, which can be evaluated analytically. This lower bound can serve to determine the minimal model order/complexity required to achieve the desired approximation quality. This renders our model better applicable in cases where many other models, e.g.,~\cite{SinkoStr1967,BanksBoc2011}, come at a high computational cost. Also, these results can be used to allow a more rigorous reexamination of studies which employ the DLSP model, i.e., \cite{BanksThom2012,Thompson2012}.

In order to study the computational complexity, we have analyzed a cell population model with division number dependent parameters. Our study indicates that, if only the first eight divisions are of interest, which is the case in many studies~\cite{HawkinsHom2007}, the system of coupled PDEs can be approximated well by a system of 20 ODEs. The associated low computational complexity for evaluation of the model predictions facilitates the in-depth analysis of the population model. In particular, parameter estimation and uncertainty analysis becomes more efficient. Therefore, besides the novel biological insight which can be gained using the DLSP model, the developed decomposition and truncation scheme should be seen as a tool for future advanced estimation procedures. This is also the case for the proposed approach to determine the measured fluorescence distribution from the label distribution. The common convolution integral formulation could be employed, but the approximation employing the log-normal distribution is more efficient and yields almost identical results. This renders the proposed approximation a useful tool, enabling a more detailed study of the system. It has been shown in~\cite{HasenauerWal2011,HasenauerWal2011b,HasenauerLoh2012} that reformulations of the model and the objective function may allow for a significant speedup of the optimization. 

In subsequent studies, estimation methods and inverse problem formulations developed for exponential growth models~\cite{ZwieteringJon1990}, division-structured population models~\cite{DeBoerGan2006,LuzyaninaMru2007}, and label-structured population models~\cite{LuzyaninaRoo2009,LuzyaninaRoo2007,BanksSut2010,BanksBoc2011}, have to be adopted to apply to the DLSP model. This is also true for methods developed for size-structured populations \cite{DoumicMai2010,DoumicPer2009}, age-structured populations \cite{GyllenbergOsi2002} and general PDEs \cite{Banks1989}, for which even convergence properties have been established. Employing parameter estimation, e.g., for the T lymphocyte data published in~\cite{LuzyaninaRoo2009}, novel insights regarding division number dependencies on the population dynamics can be gained as indicated by \cite{BanksThom2012,Thompson2012} and shown by own unpublished results. In addition, due to the improved biological interpretation of the model, these results are expected to be far more reliable.

\section*{Acknowledgement}
The authors would like to acknowledge financial support from the German Research Foundation (DFG) within the Cluster of Excellence in Simulation Technology (EXC 310/1) at the University of Stuttgart, and from the German Federal Ministry of Education and Research (BMBF) within the SysTec program (grant nr. 0315-506A) and the FORSYS-Partner program (grant nr. 0315-280A).
D.S. acknowledges financial support by the MathWorks Foundation of Science and Engineering.








\newpage
\section*{Appendix}

\begin{appendix}

\section{Proof of analytical solution of PDE~\eqref{eq: PDE part of ansatz}}
\label{app sec: Analytical solution of PDE}

To determine the solution of the PDE~\eqref{eq: PDE part of ansatz} the method of characteristics \cite{Evans1998} is employed, which is possible as~\eqref{eq: PDE part of ansatz} is linear. The characteristics of~\eqref{eq: PDE part of ansatz} are defined by the ODEs
\begin{align}
\frac{d x}{d \tau} = -k(t) x, \quad 
\frac{d t}{d \tau} = 1, \quad
\frac{d n_i}{d \tau} &= k(t) n_i,
\label{eq: ODE for characteristics}
\end{align}
with $x(0) = x_0$, $t(0) = 0$, and $n_i(x_0) = n_{i,\init}(x_0)$. This system of ODEs has the solution
\begin{align}
x(\tau) = x_0 e^{-\int_0^\tau k(\tilde{\tau}) d\tilde{\tau}}, \quad
t(\tau) = \tau, \quad
n_i(\tau) = e^{\int_0^\tau k(\tilde{\tau}) d\tilde{\tau}} n_{i,\init}(x_0).
\end{align}
By substitution we obtain
\begin{equation}
\begin{aligned}
n_i(t,x) &= e^{-\int_0^t k(\tau) d\tau} n_{i,\init}(e^{-\int_0^t k(\tau) d\tau} x) \\
&= \gamma^i e^{-\int_0^t k(\tau) d\tau} n_{0,\init}(\gamma^i e^{\int_0^t k(\tau) d \tau}x) 
\end{aligned}
\end{equation}
as solution for~\eqref{eq: PDE part of ansatz}. \hspace*{\fill} $\square$

\section{Proof of Lemma~\ref{lem: ODE solution case 1}: Solution of ODE system}
\label{app sec: Analytical solution of ODE - Case 1}

In this section we prove by mathematical induction that the ODE system
\begin{equation}
\begin{split}
	i = 0: \hspace{2mm} &
	\frac{d \bar{N}_0}{d t} = - \left( \check{\alpha} + \beta \right) \bar{N}_0,\\
	\forall i \geq 1: \hspace{2mm} &
	\frac{d \bar{N}_i}{d t} = - \left( \check{\alpha} + \beta \right) \bar{N}_i + 2 \hat{\alpha} \bar{N}_{i-1}
\end{split}
\label{eq: bounding ode system for Lemma}
\end{equation}
with initial conditions $\bar{N}_0(0) = \bar{N}_{0,\init}$ and $\forall i \geq 1: \bar{N}_i(0) = 0$, has for $\hat{\alpha},\check{\alpha} \geq 0$ and $\beta > 0$ the solution:
\begin{align}
	\bar{N}_i(t) = \frac{(2 \hat{\alpha}t)^i}{i!} e^{-(\check{\alpha}+\beta)t} \bar{N}_{0,\init}.
	\label{eq: solution of bounding ode system}
\end{align}
Thereby,~\eqref{eq: bounding ode system for Lemma} is a generalization of~\eqref{eq:solution ODE case 1}.

It is trivial to verify that $\bar{N}_0$ and $\bar{N}_1$ are the solutions of~\eqref{eq: bounding ode system for Lemma} for $i=0$ and $i=1$, respectively. Hence, only the problem of proving that $\bar{N}_{k+1}$ is the solution of~\eqref{eq: bounding ODE part of ansatz} for $i=k+1$ given $\bar{N}_{k}$ remains. To show this, note that
\begin{align}
	\eqref{eq: solution of bounding ode system} \quad \laplace \quad
	\forall i \in \mathbb{N}_0: \bar{\mathcal{N}}_i = \frac{(2 \hat{\alpha})^i}{(s + \check{\alpha} + \beta)^{i+1}} \bar{N}_{0,\init},
\end{align}
in which $\bar{\mathcal{N}}_i$ is the Laplace transform of $\bar{N}_i$. Given this
\begin{equation}
\begin{aligned}
	\frac{d \bar{N}_{k+1}}{d t} &= - \left( \check{\alpha} + \beta \right) \bar{N}_{k+1} + 2 \hat{\alpha} \bar{N}_{k}\\
	\laplace \quad
	s \bar{\mathcal{N}}_{k+1} &= - \left( \check{\alpha} + \beta \right) \bar{\mathcal{N}}_{k+1} + 2 \hat{\alpha} \bar{\mathcal{N}}_{k} \\
	\Leftrightarrow \quad
	 \bar{\mathcal{N}}_{k+1} &= \frac{2 \hat{\alpha}}{s + \check{\alpha} + \beta} \bar{\mathcal{N}}_{k}.
\end{aligned}
\end{equation}
Substitution of $\bar{\mathcal{N}}_{k}$ now yields,
\begin{align}
	\bar{\mathcal{N}}_{k+1} &= \frac{(2 \hat{\alpha})^{k+1}}{(s + \check{\alpha} + \beta)^{k+2}} \bar{N}_{0,\init}
\end{align}
which by applying the inverse Laplace transformation concludes the mathematical induction and proves Lemma~\ref{lem: ODE solution case 1}. \hspace*{\fill} $\square$

\begin{remark}
Note that for $\check{\alpha} = \hat{\alpha} = \alpha$,~\eqref{eq: solution of bounding ode system} simplifies to~\eqref{eq:solution ODE case 1}. While for $\hat{\alpha} = \alpha_{\sup}$, $\check{\alpha} = \alpha_{\inf}$, $\beta = \beta_{\inf}$ , $\bar{N}_i = \bar{B}_i$, and $\bar{N}_{0,\init} = \bar{B}_{0,\init}$, we obtain the bounding system~\eqref{eq: bounding ODE part of ansatz} and its solution.
\end{remark}

\section{Proof of Lemma~\ref{lem: ODE solution case 2}: Solution of ODE system}
\label{app sec: Analytical solution of ODE - Case 2}

In this section we prove that if 
\begin{itemize}
\item $\forall i: \alpha_i(t) = \alpha_i \ \wedge \ \beta_i(t) = \beta_i$ and
\item $\forall i,j \in \mathbb{N}_0, i \neq j: \alpha_i + \beta_i \neq \alpha_j + \beta_j$
\end{itemize}
the solution of~\eqref{eq: ODE part of ansatz} is
\begin{align}
\begin{split}
	i = 0: \hspace{1mm} &
	\bar{N}_0(t) = e^{-(\alpha_0 + \beta_0)t} \bar{N}_{0,\init} \\
	\forall i \geq 1: \hspace{1mm} &
	\bar{N}_i(t) = 2^i \left(\prod_{j=1}^{i} \alpha_{j-1}\right) D_i(t) \bar{N}_{0,\init}
\end{split}
\label{eq: solution of ode system - case 2}
\end{align}
in which
\begin{align*}
	D_i(t) = \sum_{j=0}^{i}
	\left[
	\left(\prod_{\substack{k=0\\k\neq j}}^i ((\alpha_k + \beta_k) - (\alpha_j + \beta_j))\right)^{-1} e^{-(\alpha_j + \beta_j)t}
	\right].
\end{align*}
It is not difficult to verify that $\bar{N}_0$ and $\bar{N}_1$ are the solutions of~\eqref{eq: ODE part of ansatz} for $i=0$ and $i=1$, respectively. Hence, only the problem of proving that $\bar{N}_{k+1}$ is the solution of~\eqref{eq: bounding ODE part of ansatz} for $i=k+1$ given $\bar{N}_{k}$ remains. To show this, note that for 
\begin{align}
	\eqref{eq: solution of ode system - case 2} \quad \laplace \quad
	\forall i \in \mathbb{N}_0: \bar{\mathcal{N}}_i = 2^i \dfrac{\prod_{j=1}^{i} \alpha_{j-1}}{\prod_{j=0}^{i} (s + \alpha_j + \beta_j)} \bar{N}_{0,\init},
	\label{eq: laplace transform case 2}
\end{align}
in which $\bar{\mathcal{N}}_i$ is the Laplace transform of $\bar{N}_i$. The proof of this relation is provided in Appendix~\ref{app sec: Derivation of Laplace transform case 2}.

Given~\eqref{eq: laplace transform case 2} it follows that
\begin{equation}
\begin{aligned}
	\frac{d \bar{N}_{k+1}}{d t} &= - \left( \alpha_{k+1} + \beta_{k+1} \right) \bar{N}_{k+1} + 2 \alpha_{k} \bar{N}_{k}\\
	\laplace \quad
	s \bar{\mathcal{N}}_{k+1} &= - \left( \alpha_{k+1} + \beta_{k+1} \right) \bar{\mathcal{N}}_{k+1} + 2 \alpha_{k} \bar{\mathcal{N}}_{k} \\
	\Leftrightarrow \quad
	 \bar{\mathcal{N}}_{k+1} &= \frac{2 \alpha_{k}}{s + \alpha_{k+1} + \beta_{k+1}} \bar{\mathcal{N}}_{k}.
\end{aligned}
\end{equation}
Substitution of $\bar{\mathcal{N}}_{k}$ now yields,
\begin{align}
	\bar{\mathcal{N}}_{k+1} = 2^i \dfrac{\prod_{j=1}^{k+1} \alpha_{j-1}}{\prod_{j=0}^{k+1} (s + \alpha_j + \beta_j)} \bar{N}_{0,\init}
\end{align}
which by applying the inverse Laplace transformation concludes the mathematical induction and proves~\eqref{eq: solution of bounding ode system}. \hspace*{\fill} $\square$

\section{Derivation of Laplace transform~\eqref{eq: laplace transform case 2}}
\label{app sec: Derivation of Laplace transform case 2}

To derive~\eqref{eq: laplace transform case 2}, we study the partial fraction of
\begin{align}
\bar{\mathcal{N}}_i = 2^i \dfrac{\prod_{j=1}^{i} \alpha_{j-1}}{\prod_{j=0}^{i} (s + \alpha_j + \beta_j)} \bar{N}_{0,\init}.
\end{align}
As under the prerequisite $\forall i,j \in \mathbb{N}_0 \; \text{with} \; i \neq j: \alpha_i + \beta_i \neq \alpha_j + \beta_j$ all poles are distinct, the partial fraction can be written as
\begin{align}
\bar{\mathcal{N}}_i(s) &= 2^i \left(\prod_{j=1}^{i} \alpha_{j-1}\right) \left(\sum_{k=0}^N\frac{c_k}{(s+\alpha_k+ \beta_k)}\right) \bar{N}_{0,\init}.
\label{eq: partial fraction for case 2}
\end{align}
To determine the coefficients $c_k$, we consider the equality constraint
\begin{equation}
\begin{aligned}
\dfrac{1}{\prod_{j=0}^{i} (s + \alpha_j + \beta_j)} &= \sum_{k=0}^i\frac{c_k}{(s+\alpha_k+ \beta_k)} \\
\Leftrightarrow \hspace{2.7cm} 1 &= \sum_{k=0}^i c_k \prod_{\substack{j=1\\j\neq k}}^i (s + \alpha_j + \beta_j).
\end{aligned}
\end{equation}
As this equality constraint has to hold for all $s$, it must be satisfied for $s = - (\alpha_k+ \beta_k)$, yielding
\begin{equation}
	c_k = \left( \prod_{\substack{j=1\\j\neq k}}^i ((\alpha_j + \beta_j) - (\alpha_k + \beta_k)) \right)^{-1}.
\end{equation}
Given the values for $c_k$ one can easily verify~\eqref{eq: laplace transform case 2} by plugging in the $c_k$'s into~\eqref{eq: partial fraction for case 2}. Obviously, the proposed procedure can also be inverted, which concludes the derivation of~\eqref{eq: laplace transform case 2}. \hspace*{\fill} $\square$

\section{Proof of Theorem~\ref{theorem: convergence}: Convergence}
\label{app sec: Proof of convergence Theorem}

To prove Theorem~\ref{theorem: convergence}, the comparison theorem for series~\cite{Knopp1964} is applied. Therefore, we define the bounding system 
\begin{equation}
\begin{split}
	i = 0: \hspace{2mm} &
	\frac{d \bar{B}_0}{d t} = - \left( \alpha_{\inf} + \beta_{\inf} \right) \bar{B}_0,\\
	\forall i \geq 1: \hspace{2mm} &
	\frac{d \bar{B}_i}{d t} = - \left( \alpha_{\inf} + \beta_{\inf} \right) \bar{B}_i + 2 \alpha_{\sup} \bar{B}_{i-1}
\end{split}
\label{eq: bounding ODE part of ansatz} 
\end{equation}
with initial conditions
\begin{align*}
i = 0: \bar{B}_0(0) = \bar{N}_{0,\init}, \quad \forall i \geq 1: \bar{B}_i(0) = 0
\end{align*}
and $\alpha_{\inf}$, $\alpha_{\sup}$, and $\beta_{\inf}$ as in Theorem~\ref{theorem: convergence}. Due to the simple structure of~\eqref{eq: bounding ODE part of ansatz}, we can compute the analytical solution
\begin{align}
	\bar{B}_i(t) = \frac{(2 \alpha_{\sup}t)^i}{i!} e^{-(\alpha_{\inf}+\beta_{\inf})t} \bar{N}_{0,\init},
	\label{eq: solution of bounding ode system - main text}
\end{align}
whose derivation can be found in Appendix~\ref{app sec: Analytical solution of ODE - Case 1}.

The bounding system~\eqref{eq: bounding ODE part of ansatz} is obtained from~\eqref{eq: ODE part of ansatz} by reducing the outflows out of and increasing the inflows into the individual subpopulations. Intuitively, as the initial conditions of~\eqref{eq: bounding ODE part of ansatz} and~\eqref{eq: ODE part of ansatz} are identical and the right hand side of~\eqref{eq: bounding ODE part of ansatz} is for every $t \in [0,T]$ greater or equal than the right hand side of~\eqref{eq: ODE part of ansatz}, it follows that $B_i$ is an upper bound for $N_i$,
\begin{align}
	\forall t \in [0,T],i: \quad \bar{B}_i(t) \geq \bar{N}_i(t).
	\label{eq: bounding constraint}
\end{align}
This can be proven rigorously by applying M\"uller's theorem~\cite{Mueller1927}, as shown in~\cite{KiefferWal2011} for another system.

Given~\eqref{eq: solution of bounding ode system - main text} and~\eqref{eq: bounding constraint} one can prove the convergence of  $\sum_{i\in\mathbb{N}_0} N_i(t,x)$. To take into account that a distributed process is considered ($x\geq 0$), we study the maximum over $x$ and define
 	$B_i(t) := \bar{B}_i(t) \gamma^i e^{kt} n_x^{\sup} 
		= \frac{(2 \alpha_{\sup} \gamma)^i}{i!} t^i e^{-(\alpha_{\inf}+\beta_{\inf})t} e^{kt} N_x^{\sup}$
with $n_x^{\sup} := \sup_x \{ n_{0,\init}(x) \}$ and $N_x^{\sup} := \sup_x \{ N_{0,\init}(x) \}$. Thus, $B_i(t)$ is a point-wise upper bound of $N_i(t,x)$. For this definition of $B_i(t)$ it holds that
\begin{enumerate}
 \item[(i)] $\forall i,t,x \geq 0: \; 0 \leq N_i(t,x) \leq B_i(t) \; \forall i$, and
 \item[(ii)] the series 
 	\begin{equation}
 	\begin{aligned}
		\sum_{i=0}^{\infty} B_i(t)
		&= \left(\sum_{i=0}^{\infty} \frac{(2 \alpha_{\sup} \gamma t)^i}{i!}\right) e^{-(\alpha_{\inf}+\beta_{\inf})t} e^{kt} N_x^{\sup}
	\end{aligned} 
 	\end{equation}
	is convergent for every finite $t$.
\end{enumerate}
The latter one holds true as the series is simply the Taylor expansion of the exponential $e^{2 \alpha_{\sup} \gamma t}$. Under conditions (i) and (ii) it follows from the comparison theorem for series~\cite{Knopp1964} that the series $\sum_{i\in\mathbb{N}_0} \bar{N}_i(t)$ is convergent in $i$ for every $t \in [0,T]$ and for every $x \geq 0$. This concludes the proof. \hspace*{\fill} $\square$

\section{Proof of Theorem~\ref{theorem: truncation error}: Truncation error}
\label{app sec: Proof of truncation error Theorem}

To prove Theorem~\ref{theorem: truncation error}, note that
\begin{equation}
\begin{aligned}
||M(T,x) - \hat{M}_S(T,x)||_1
&= ||\sum_{i = S}^\infty \bar{N}_i(T) n_i(T,x)||_1\\
&= \sum_{i = S}^\infty \bar{N}_i(T) \int_{\mathbb{R}_+}n_i(T,x)dx\\
&= \sum_{i = S}^\infty \bar{N}_i(T),
\end{aligned}
\end{equation}
in which the individual lines follow from the approximation methods~\eqref{eq: approximated solution of overall label density}, the fact that all quantities are positive, and the definition of the normalized label intensity~\eqref{eq: PDE part of ansatz} which has unity integral for all times $T \geq 0$. The remaining term in the following is successively upper bounded, for which we employ the bounding system~\eqref{eq: bounding ODE part of ansatz}. As shown in Appendix~\ref{app sec: Proof of convergence Theorem}, it holds that $\bar{N}_i(t) \leq \bar{B}_i(t)$ which yields
\begin{equation}
\begin{aligned}
\sum_{i = S}^\infty \bar{N}_i(T) \leq \sum_{i = S}^\infty \bar{B}_i(T) = \sum_{i = S}^\infty \frac{(2 \alpha_{\sup}T)^i}{i!} e^{-(\alpha_{\inf}+\beta_{\inf})T} \bar{N}_{0,\init}.
\end{aligned}
\end{equation}
By completion of the sum, this can be written as
\begin{equation}
\begin{aligned}
\sum_{i = S}^\infty \bar{N}_i(T)
&\leq \left( e^{2 \alpha_{\sup} T} - \sum_{i = 0}^{S-1} \frac{(2 \alpha_{\sup}T)^i}{i!} \right) e^{-(\alpha_{\inf}+\beta_{\inf})T} \bar{N}_{0,\init}.
\end{aligned}
\end{equation}
Thus, by exploiting that $||M(0,x)||_1 = \bar{N}_{0,\init}$, one obtains~\eqref{eq: upper bound for truncation error}, which concludes the proof. \hspace*{\fill} $\square$

\section{Proof that the solution of LSP can be constructed from DLSP}
\label{app sec: Solution of DLSP solves LSP}

To prove that the DLSP provides the solution to the LSP, $M^{\LSP}(t,x) = M(t,x)$, we show that $M(t,x) = \sum_{i \in \mathbb{N}_0} \bar{N}_i(t) n_i(t,x)$ solves~\eqref{eq: LSP model}. Therefore, $M(t,x)$ is inserted in the left hand side $(*)$ of \eqref{eq: LSP model}, yielding
\begin{align*}
 	(*)
	& = \frac{\partial}{\partial t} \left(\sum_{i \in \mathbb{N}_0}{\bar{N}_i(t)n_i(t,x)}\right)
		- k \frac{\partial}{\partial x} \left(x \sum_{i \in \mathbb{N}_0}{\bar{N}_i(t)n_i(t,x)} \right) \\
	&= \sum_{i \in \mathbb{N}_0}{\left( \frac{d \bar{N}_i(t)}{d t} n_i(t,x)
		+ \bar{N}_i(t) \underbrace{\left( \frac{\partial n_i(t,x)}{\partial t}
		- k \frac{\partial(x n_i(t,x))}{\partial x} \right)}_{=0 \; \text{(with \eqref{eq: PDE part of ansatz})}} \right)}.
\end{align*}
In here, $d\bar{N}_i(t)/dt$ is substituted with \eqref{eq: ODE part of ansatz}, resulting in
\begin{align*}
 	(*)
	& = \sum_{i \in \mathbb{N}_0}{\left(-(\alpha(t) + \beta(t))\bar{N}_i(t) n_i(t,x) \right)}
		+ \sum_{i \in \mathbb{N}} 2\alpha(t)\bar{N}_{i-1}(t)
		\hspace{-3mm} \underbrace{n_i(t,x) } _{= \gamma n_{i-1}(t, \gamma x) }\\
	&= - (\alpha(t) + \beta(t)) \sum_{i \in \mathbb{N}_0}{\bar{N}_i(t) n_i(t,x)}
		+ 2 \gamma \alpha(t) \sum_{i \in \mathbb{N}_0}{\bar{N}_i(t) n_i(t, \gamma x)}
\end{align*}
This is equivalent to the result if $M(t,x)$ is inserted in the right hand side $(*)$ of~\eqref{eq: LSP model}. Hence, $M(t,x) = \sum_{i \in \mathbb{N}_0} \bar{N}_i(t) n_i(t,x)$ fulfills~\eqref{eq: LSP model} which concludes the proof. \hspace*{\fill} $\square$ 

\section{Proof that the PDE~\eqref{eq: PDE part of ansatz} conserves log-normal distributions}
\label{app sec: Proof that the PDE conserves log-normal distributions}

To prove that the PDE~\eqref{eq: PDE part of ansatz} conserves log-normal distributions, we use its analytical solution~\eqref{eq: solution of overall label density} and consider $n_{0,\init}(x) = \log\mathcal{N}(x|\mu_\init,\sigma_\init^2)$. This yields the solution
\begin{align}
n_i(t,x) &= \gamma^i e^{-\int_0^t k(\tau) d\tau} \log\mathcal{N}(\gamma^i e^{\int_0^t k(\tau) d \tau}x|\mu_\init,\sigma_\init^2).
\end{align}
Employing the definition of the log-normal distribution, this equation becomes
\begin{align}
n_i(t,x)
&= \gamma^i e^{-\int_0^t k(\tau) d\tau} \frac{1}{\sqrt{2 \pi} \sigma_\init \left(\gamma^i e^{\int_0^t k(\tau) d \tau}x\right)} e^{-\frac{1}{2} \left( \frac{\log\left( \gamma^i e^{\int_0^t k(\tau) d \tau}x \right) - \mu_\init}{\sigma_\init}\right)^2} \\
&= \frac{1}{\sqrt{2 \pi} \sigma_\init x} e^{-\frac{1}{2} \left( \frac{\log x - \left( - i \log \gamma - \int_0^t k(\tau) d \tau + \mu_\init\right)}{\sigma_\init}\right)^2}.
\end{align}
for  $x>0$, which can be restated as
\begin{align}
n_i(t,x) = \log\mathcal{N}(x|\mu_i(t),\sigma_\init^2),
\end{align}
in which $\mu_i(t) =  - i \log \gamma - \int_0^t k(\tau) d \tau + \mu_\init$. As this equation also holds for $x\leq0$, it follows that the log-normal distribution is conserved and merely the parameter $\mu$ is time dependent. Employing the superposition principle, this statement can be directly extended for sums of log-normal distributions, which concludes the proof.\hspace*{\fill} $\square$ 

\end{appendix}

\end{document}